\documentclass[english]{article}
\usepackage{amsthm}
\usepackage{amsmath}
\usepackage{amssymb}
\usepackage[unicode=true,pdfusetitle,
 bookmarks=true,bookmarksnumbered=false,bookmarksopen=false,
 breaklinks=false,pdfborder={0 0 1},backref=false,colorlinks=false]
 {hyperref}
\usepackage{breakurl}

\makeatletter

\providecommand{\tabularnewline}{\\}

\theoremstyle{plain}
\newtheorem{thm}{\protect\theoremname}
  \theoremstyle{definition}
  \newtheorem{defn}[thm]{\protect\definitionname}
  \theoremstyle{plain}
  \newtheorem{cor}[thm]{\protect\corollaryname}
  \theoremstyle{plain}
  \newtheorem{lem}[thm]{\protect\lemmaname}
  \theoremstyle{plain}
  \newtheorem{prop}[thm]{\protect\propositionname}
  \theoremstyle{remark}
  \newtheorem{rem}[thm]{\protect\remarkname}

\usepackage{bbm}
\usepackage{graphicx}

\makeatother

  \providecommand{\corollaryname}{Corollary}
  \providecommand{\definitionname}{Definition}
  \providecommand{\lemmaname}{Lemma}
  \providecommand{\propositionname}{Proposition}
  \providecommand{\remarkname}{Remark}
\providecommand{\theoremname}{Theorem}

\begin{document}
\global\long\def\one{\mathbbm{1}}

\global\long\def\id{\mathrm{id}}
\global\long\def\diag{\mathrm{diag}}
\global\long\def\tr{\mathrm{Tr\,}}
\global\long\def\e{\mathbb{E}}

\global\long\def\rhoin{\rho_{\mathrm{in}}}
\global\long\def\rhoout{\rho_{\mathrm{out}}}

\global\long\def\C{\mathbb{C}}
\global\long\def\R{\mathbb{R}}
\global\long\def\Z{\mathbb{Z}}
\global\long\def\N{\mathbb{N}}

\global\long\def\Ncal{\mathcal{N}}
\global\long\def\A{\mathcal{A}}
\global\long\def\D{\mathcal{D}}
\global\long\def\Q{\mathcal{Q}}
\global\long\def\Hcal{\mathcal{H}}
\global\long\def\Ccal{\mathcal{C}}
\global\long\def\Ecal{\mathcal{E}}
\global\long\def\S{\mathcal{S}}
\global\long\def\M{\mathcal{M}}

\global\long\def\Qmax{\mathcal{Q}_{\mathrm{max}}}

\global\long\def\Tr#1{\mathrm{Tr}\left(#1\right)}
\global\long\def\Wg#1{\mathrm{Wg}\left(#1\right)}
\global\long\def\E#1{\mathrm{\mathbb{E}}\left[#1\right]}

\global\long\def\bra#1{\langle#1|}
\global\long\def\ket#1{|#1\rangle}
\global\long\def\braket#1#2{\left\langle #1\middle|#2\right\rangle }

\global\long\def\Mtwo#1#2#3#4{\begin{pmatrix}#1  &  #2\\
#3  &  #4 
\end{pmatrix}}

\global\long\def\Deltaln{\Delta_{\lambda}^{\otimes n}}
\global\long\def\An{\mathcal{A}^{\otimes n}}
\global\long\def\Srreg{\overline{S}_{r}^{\mathrm{reg}}}

\global\long\def\fracp#1#2{\left(\frac{#1}{#2}\right)}

\global\long\def\ot{\otimes}

\global\long\def\reg{\mathrm{reg}}

\title{Average output entropy for quantum channels}

\author{Christopher King\textsuperscript{1} and David K. Moser\textsuperscript{1,2}{\small }\\
{\small }\\
{\small 1: Department of Mathematics}\\
{\small 2: Department of Physics}\\
{\small Northeastern University}\\
{\small Boston MA 02115}}
\maketitle
\begin{abstract}
We study the regularized average Renyi output entropy $\overline{S}_{r}^{\reg}$
of quantum channels. This quantity gives information about the average
noisiness of the channel output arising from a typical, highly entangled
input state in the limit of infinite dimensions. We find a closed
expression for $\beta_{r}^{\reg}$, a quantity which we conjecture
to be equal to $\Srreg$. We find an explicit form for $\beta_{r}^{\reg}$
for some entanglement-breaking channels, and also for the qubit depolarizing
channel $\Delta_{\lambda}$ as a function of the parameter $\lambda$.
We prove equality of the two quantities in some cases, in particular
we conclude that for $\Delta_{\lambda}$ both are non-analytic functions
of the variable $\lambda$.
\end{abstract}
\tableofcontents{}

\section{Introduction}

The noisiness of a quantum channel is closely related to its ability
to transfer information, and is reflected in the values of the various
channel capacities. Much work has been done on understanding the capacities,
for example \cite{smith_quantum_channel_capacity} provides a recent
survey. These capacities are sometimes difficult to analyze directly,
for example the Holevo capacity is computed using multiple output
states. Accordingly other more mathematically tractable quantities
have been used to measure the amount of noise introduced by the channel.
One example is the minimal output Renyi entropy \cite{amosov_holevo_werner_additivity}
of a channel ${\cal A}$, defined for $r\ge1$: 
\[
S_{r,\min}(\A)=\min_{\ket{\phi}}\,\frac{1}{1-r}\,\log\tr\left(\A(\ket{\phi}\bra{\phi})\right)^{r}
\]
At $r=1$ this yields the minimal output von Neumann entropy, which
has a close connection to the classical capacity of the channel \cite{matsumoto_shimono_winter_additivity_remarks}.
The entropies for $r>1$ also provide useful properties of the channel,
and in some cases are easier to analyze and compute. The famous additivity
conjecture concerns the regularized version of this quantity, which
is defined as 
\[
S_{r,\min}^{\reg}(\A)=\lim_{n\rightarrow\infty}\,\frac{1}{n}\, S_{r,\min}(\A^{\ot n})
\]
(the existence of the limit is an easy consequence of the sub-additivity
bound ${\cal S}_{r,\min}({\cal A}\ot{\cal B})\leq{\cal S}_{r,\min}({\cal A})+{\cal S}_{r,\min}({\cal B})$).
While the inequality $S_{r,\min}^{\reg}(\A)\le S_{r,\min}(\A)$ is
always true, it was an open question for several years whether equality
holds. It is now known that equality does not hold in general \cite{hayden_winter_p_counterexamples,hastings_counterexample},
so this raises the interesting question of determining $S_{r,\min}^{\reg}(\A)$.
Except for those channels where additivity does hold, the value of
this regularized quantity is unknown. For channels with non-additive
Holevo capacity the classical capacity is also defined by such a regularized
quantity, so it is an important problem to find new ways to calculate
these regularized limits. In a sense we follow a strategy opposite
to the random channel methods used to disprove the additivity conjecture;
our high-dimensional channels are products of fixed channels, and
thus are constructed explicitly.

\medskip{}
In the hopes of finding some new insights into these regularized channel
properties we consider a related quantity which also measures the
noisiness of the channel, namely the average output Renyi entropy.
This measures the entropy of the channel output for typical input
states, rather than the smallest value which is used to compute the
minimal output entropy. For a finite-dimensional channel this is defined
for all $r\ge1$ by 
\[
\overline{S}_{r}(\A)=\E{\frac{1}{1-r}\,\log\tr\left(\A(\ket{\phi}\bra{\phi})\right)^{r}}
\]
where the expectation is computed using the uniform probability measure
on the set of input pure states. We also consider the related quantity
\[
\beta_{r}(\A)=\,\frac{1}{1-r}\,\log\E{\tr\left(\A(\ket{\phi}\bra{\phi})\right)^{r}}
\]
Note that by Jensen's inequality 
\begin{equation}
\overline{S}_{r}(\A)\ge\beta_{r}(\A)\,.\label{eq:entropy-ineq}
\end{equation}
These quantities can be computed (at least numerically) for any given
channel. In operational terms, they describe the long-run average
output Renyi entropy of the channel for a sequence of random pure
input states. Loosely speaking, they measure the average noisiness
of an output state from the channel.

\medskip{}
As with the minimal Renyi entropy, we also consider the regularized
versions of these quantities. However unlike the minimal Renyi entropy,
the existence of these regularized limits is not obvious, so we define
them conservatively using the $\liminf$: 
\begin{align}
\Srreg(\A) & =\liminf_{n\rightarrow\infty}\,\frac{1}{n}\,\overline{S}_{r}(\A^{\ot n})\label{eq:def-regs}\\
\beta_{r}^{\reg}(\A) & =\liminf_{n\rightarrow\infty}\,\frac{1}{n}\,\beta_{r}(\A^{\ot n})\nonumber 
\end{align}
\medskip{}
We conjecture that the two quantities in \eqref{eq:def-regs} are
equal, however we do not yet have a proof of this for a general channel.
For the specific channels we look at in more detail the quantity $\beta_{r}^{{\rm reg}}({\cal A})$
is given by the expression above with $\liminf$ replaced by $\lim$.
But from the closed expression for $\beta_{r}^{\reg}(\A)$ presented
below there is a possibility for more complex limiting behavior.

For one special class of channels we can compute a simple formula
for $\beta_{r}^{\reg}(\A)$ for integer values of $r$. These are
a subset of the entanglement breaking (E-B) channels \cite{ruskai_entanglementbreaking,ruskai_shor_general_ebc},
which can be written in the form ${\cal A}(\rho)=\sum_{k}\sigma_{k}\tr(X_{k}\rho)$
-- for some states ${\sigma_{k}}$ and with $X_{k}$ a POVM -- meeting
the additional condition $\Tr{\prod_{i=1}^{r}\sigma_{k_{i}}}\geq0$.
For $r=2$ this includes all the E-B channels. For higher $r$ a particular
class of E-B channels that fulfill the condition are the QC channels
as defined by Holevo \cite{holevo_qc_channels}, where $\sigma_{k}=\ket k\bra k$
are pure states formed by an orthonormal basis. For unital E-B channels
which satisfy the condition on the $\sigma_{k}$ we can prove even
more, namely that $\Srreg(\A)$=$\beta_{r}^{\reg}(\A)$ = $\log d$.\medskip{}

We consider only finite-dimensional channels with equal input and
output dimensions, and we define the dimension of the channel to be
this common value. The identity matrix is denoted by $\one$.
\begin{thm}
[proof in \ref{sub:Entanglement-breaking-channels}]\label{thm:beta-reg-eb}

\ \begin{itemize}
\item[(a)]Let $r\geq2$ be an integer, and $\A(\rho)=\sum\sigma_{k}\Tr{X_{k}\rho}$
an entanglement breaking channel satisfying the condition $\Tr{\prod_{i=1}^{r}\sigma_{k_{i}}}\geq0$
for all choices of $\{k_{i}\}$. Then

\[
\beta_{r}^{\reg}(\A)=\lim_{n\to\infty}\frac{1}{n}\beta_{r}(\A^{\ot n})=\frac{1}{1-r}\log\tr({\cal A}(\one/d)^{r})\,.
\]

\item[(b)]Let ${\cal A}$ be a d-dimensional unital entanglement-breaking
channel, $\A(\one)=\one$, satisfying the condition $\Tr{\prod_{i=1}^{m}\sigma_{k_{i}}}$
for all integer $m\geq2$. Then for all real $r\geq1$
\[
\Srreg(\A)=\beta_{r}^{\reg}(\A)=\log d\,.
\]

\end{itemize}
\end{thm}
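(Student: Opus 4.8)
The plan for part (a) is to evaluate the integer moment $\E{\tr(\A(\ket\phi\bra\phi))^{r}}$ exactly by the replica/swap trick. Writing $C_{r}$ for the cyclic shift operator on $r$ output copies and using $\tr(M^{r})=\tr(M^{\ot r}C_{r})$ together with $\A(\ket\phi\bra\phi)^{\ot r}=\A^{\ot r}((\ket\phi\bra\phi)^{\ot r})$, I would pull the expectation inside and insert the standard identity $\E{(\ket\phi\bra\phi)^{\ot r}}=P_{\mathrm{sym}}/\binom{d+r-1}{r}$, where $P_{\mathrm{sym}}$ projects onto the symmetric subspace of $(\C^{d})^{\ot r}$. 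Expanding $\A^{\ot r}$ via the representation $\A(\rho)=\sum_{k}\sigma_{k}\Tr{X_{k}\rho}$ then gives
\[
\E{\tr(\A(\ket\phi\bra\phi))^{r}}=\frac{1}{\binom{d+r-1}{r}}\sum_{k_{1},\dots,k_{r}}\Tr{\textstyle\prod_{i}\sigma_{k_{i}}}\;\tr\big((X_{k_{1}}\ot\cdots\ot X_{k_{r}})P_{\mathrm{sym}}\big).
\]
The two crucial positivity facts are that each factor $\Tr{\prod_{i}\sigma_{k_{i}}}$ is nonnegative by hypothesis, and each factor $\tr((X_{k_{1}}\ot\cdots\ot X_{k_{r}})P_{\mathrm{sym}})$ is nonnegative because it is the trace of a product of positive operators; hence the whole sum is termwise nonnegative.

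For the upper bound I would use $P_{\mathrm{sym}}\le\one$, which by termwise nonnegativity replaces the last factor by $\prod_{i}\tr(X_{k_{i}})$ and collapses the sum to $\Tr{M^{r}}$ with $M=\sum_{k}\tr(X_{k})\sigma_{k}=d\,\A(\one/d)$, yielding $\E{\tr(\A(\ket\phi\bra\phi))^{r}}\le\binom{d+r-1}{r}^{-1}d^{r}\Tr{\A(\one/d)^{r}}$. For the matching lower bound I would use convexity of $M\mapsto\tr(M^{r})$ for $r\ge1$ and Jensen's inequality to get $\E{\tr(\A(\ket\phi\bra\phi))^{r}}\ge\Tr{\A(\one/d)^{r}}$, since $\E{\ket\phi\bra\phi}=\one/d$. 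Applying both bounds to $\A^{\ot n}$ (which is again entanglement breaking with states $\sigma_{k_{1}}\ot\cdots\ot\sigma_{k_{n}}$ and still satisfies the hypothesis, since the relevant trace factorizes into nonnegative pieces), and using $\A^{\ot n}(\one/d^{n})=\A(\one/d)^{\ot n}$ so that $\Tr{\A^{\ot n}(\one/d^{n})^{r}}=(\Tr{\A(\one/d)^{r}})^{n}$, the dimension-dependent prefactor $d^{nr}/\binom{d^{n}+r-1}{r}\to r!$ is subexponential. Taking $\tfrac{1}{n}\log$, the prefactor drops out and both bounds squeeze $\lim_{n}\tfrac1n\log\E{\tr(\A^{\ot n}(\ket\phi\bra\phi))^{r}}=\log\Tr{\A(\one/d)^{r}}$; multiplying by $\tfrac{1}{1-r}<0$ gives the claimed value and shows the $\liminf$ is a genuine limit.

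For part (b), unitality gives $\A(\one/d)=\one/d$, hence $\Tr{\A(\one/d)^{r}}=d^{1-r}$ and part (a) yields $\beta_{m}^{\reg}(\A)=\log d$ for every integer $m\ge2$ (this is where the hypothesis is required for all $m$). The upper bound $\beta_{r}^{\reg}(\A)\le\Srreg(\A)\le\log d$ is immediate, since the output of $\A^{\ot n}$ lives in dimension $d^{n}$ and the Renyi entropy of any state is at most the logarithm of the dimension. To pass from integer to arbitrary real $r\ge1$ I would prove that $\beta_{r}(\A)$ is non-increasing in $r$: letting $\mu$ be the finite measure on $(0,1]$ obtained by averaging the output eigenvalue spectra over $\phi$, so that $\int\lambda\,d\mu=1$ and $\E{\tr(\A(\ket\phi\bra\phi))^{r}}=\int\lambda^{r}\,d\mu$, I set $d\nu=\lambda\,d\mu$ (a probability measure), so that $\beta_{r}(\A)=-\tfrac{1}{r-1}\log\int\lambda^{\,r-1}\,d\nu$. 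With $s=r-1$ and $\psi(s)=\log\int\lambda^{\,s}\,d\nu$ the cumulant generating function of $-\log\lambda\ge0$, convexity of $\psi$ together with $\psi(0)=0$ gives $\psi(s)-s\psi'(s)\le0$, i.e.\ $s\mapsto-\psi(s)/s$ is non-increasing; hence $\beta_{r}(\A)$ is non-increasing in $r$. The same argument for $\A^{\ot n}$ shows each $\tfrac1n\beta_{r}(\A^{\ot n})$ is non-increasing in $r$, so the $\liminf$ defining $\beta_{r}^{\reg}$ is non-increasing in $r$ as well. Then for any real $r\ge1$, choosing an integer $m\ge\max(2,r)$ gives $\beta_{r}^{\reg}(\A)\ge\beta_{m}^{\reg}(\A)=\log d$, and combining with the upper bound yields $\beta_{r}^{\reg}(\A)=\log d$. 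Finally $\log d=\beta_{r}^{\reg}(\A)\le\Srreg(\A)\le\log d$ by the regularized form of \eqref{eq:entropy-ineq}, forcing $\Srreg(\A)=\log d$.

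I expect the main obstacle to be twofold. First, justifying the two positivity statements cleanly and confirming that the product channel retains the hypothesis: without termwise nonnegativity the bound $P_{\mathrm{sym}}\le\one$ cannot be used, so this is exactly where the assumption $\Tr{\prod_{i}\sigma_{k_{i}}}\ge0$ earns its keep. Second, the monotonicity lemma for $\beta_{r}$ in part (b): the reduction to the single one-dimensional measure $\nu$ and the convexity-of-$\psi$ argument is the delicate point, and care is needed with eigenvalues equal to zero (which carry no $\nu$-mass) and with commuting the $\liminf$ past the monotonicity in $r$.
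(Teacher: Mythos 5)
Your proposal is correct, but it takes a genuinely different route from the paper's. For part (a) the paper stays inside its general Weingarten framework: Theorem~\ref{thm:beta-reg} expresses $\beta_{r}^{\reg}$ through $\Qmax=\max_{\alpha}|\Q_{\A,\,r}(\alpha)|$, and the dedicated Lemma~\ref{lem:ebc} shows, via an equality-case analysis of $|\tr\prod_{j}Y_{j}|\leq\prod_{j}\tr Y_{j}$, that $\alpha=\id$ is the \emph{unique} maximizer with $\Q(\id)=\tr\A(\one)^{r}$; uniqueness is what lets the paper upgrade the $\liminf$ to a limit via Theorem~\ref{thm:beta-reg}(b). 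You instead compute the moment directly from $\E{(\ket{\phi}\bra{\phi})^{\ot r}}=P_{\mathrm{sym}}/\binom{d+r-1}{r}$ and squeeze: Jensen gives the lower bound $\tr(\A(\one/d)^{r})$, while $P_{\mathrm{sym}}\leq\one$ combined with termwise nonnegativity (exactly where the hypothesis $\Tr{\prod_{i}\sigma_{k_{i}}}\geq0$ enters, playing the same role it plays in Lemma~\ref{lem:ebc}) gives the matching upper bound $d^{r}\tr(\A(\one/d)^{r})/\binom{d+r-1}{r}$. Your sandwich yields existence of the limit for free, with no uniqueness analysis and no Diophantine/subsequence argument, and is more self-contained; what it does not produce is the general $\Qmax$ formula, which the paper reuses later for the depolarizing channel. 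For part (b) both arguments start identically (integer case plus the $\log d$ ceiling and inequality~\eqref{eq:entropy-ineq}), but diverge on the extension to real $r\geq1$: the paper proves convexity of $f_{n}(r)=\frac{1}{n}\log M_{r}(\A^{\ot n})$ in $r$ by Cauchy--Schwarz and interpolates linearly between consecutive integers, treating $r=1$ separately via $-f_{n}'(1)$, whereas you prove that $r\mapsto\frac{1}{n}\beta_{r}(\A^{\ot n})$ is non-increasing via convexity of $\psi(s)=\log\int\lambda^{s}\,d\nu$ with $\psi(0)=0$, and then pass the monotonicity through the $\liminf$. Both are sound; your route needs exactly the caveat you flag, namely identifying $\beta_{1}$ (the von Neumann case) with $-\psi'(0^{+})$ so that $\beta_{1}\geq\beta_{m}$ holds at the endpoint, which is the analogue of the paper's separate $r=1$ treatment.
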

We also derive an explicit expression for $\beta_{r}^{\reg}(\A)$
in the general case. The statement of this result requires some additional
notation. First recall the definition of the Choi-Jamiolkowski representation
\cite{choi_complete_positivity} of a channel, namely

\[
Choi({\cal A})=\sum_{x,y}{\cal A}(\ket x\bra y)\ot\ket x\bra y
\]
where $\ket x$ and $\ket y$ are orthonormal bases of pure input
states. Also let $Sym(r)$ denote the symmetric group on $r$ letters.
Then every element $\alpha\in Sym(r)$ defines a permutation operator
on $(\C^{d})^{\ot r}$ by ${\cal R}(\alpha)(v_{1}\ot\cdots\ot v_{r})=v_{\alpha(1)}\ot\cdots\ot v_{\alpha(r)}$.
\begin{defn}
\label{def:q}Let ${\cal A}$ be a channel. For all $\alpha\in Sym(r)$
define

\begin{equation}
\Q_{{\cal A},\, r}(\alpha)=\tr\left[Choi({\cal A})^{\ot r}({\cal R}(123\dots r)\ot{\cal R}(\alpha))\right]\,.
\end{equation}

\end{defn}
Furthermore, for some channels $\beta_{r}^{\reg}$ is a simple limit.
One such class are the entrywise positive maps defined studied in
\cite{king_multiplicativity_2005}.
\begin{defn}
A channel $\A$ is called \emph{entrywise positive }if there exist
a bases for input and output space such that $\bra s\A(\ket x\bra y)\ket t\geq0$
for all $x,y,s,t$.
\end{defn}
It is clear that this definition is equivalent to $Choi(\A)$ being
entrywise positive.
\begin{thm}
[proof in \ref{sub:Average-moments}]\label{thm:beta-reg}

\ \begin{itemize}\item[(a)]Let $r\geq2$ be an integer and let $\Q_{\max}=\max_{\alpha\in Sym(r)}|\Q_{{\cal A},\, r}(\alpha)|$.
Then 
\[
\beta_{r}^{\reg}(\A)=\frac{r\log d-\log\Q_{\max}}{r-1}\,.
\]

\item[(b)] Let $r\ge2$ be an integer. If the maximum $\Qmax$ is
attained for a unique $\alpha$ then the $\liminf$ in $\beta_{r}^{\reg}$
can be replaced with a regular limit:

\[
\beta_{r}^{\reg}(\A)=\lim_{n\to\infty}\frac{1}{n}\beta_{r}(\A^{\ot n})\,.
\]

\item[(c)]If the channel $\A$ is entrywise positive then for all
integer $r\geq2$
\[
\beta_{r}^{\reg}(\A)=\lim_{n\to\infty}\frac{1}{n}\beta_{r}(\A^{\ot n})\,.
\]

\end{itemize}
\end{thm}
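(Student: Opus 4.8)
The plan is to reduce $\beta_{r}(\A^{\ot n})$ to an explicit finite sum over $Sym(r)$ by symmetric-subspace (Weingarten) integration, then to establish a clean multiplicativity law for the quantities $\Q_{\A,r}$, and finally to extract the exponential growth rate of the resulting sum as $n\to\infty$. First I would compute the $r$-th moment for a single channel. Since $r\geq2$ is an integer, the swap trick gives $\tr(\A(\ket\phi\bra\phi))^{r}=\tr[\A^{\ot r}((\ket\phi\bra\phi)^{\ot r})\,{\cal R}(123\dots r)]$, and averaging over pure states replaces $(\ket\phi\bra\phi)^{\ot r}$ by its mean, the normalized projector onto the symmetric subspace, $\E{(\ket\phi\bra\phi)^{\ot r}}=\binom{d+r-1}{r}^{-1}\tfrac{1}{r!}\sum_{\alpha\in Sym(r)}{\cal R}(\alpha)$. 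Using linearity of $\A^{\ot r}$ and the Choi representation to rewrite each $\tr[\A^{\ot r}({\cal R}(\alpha))\,{\cal R}(123\dots r)]$ as $\Q_{\A,r}(\alpha)$ (up to a relabeling of $Sym(r)$ that affects neither the sum below nor $\Qmax$), I obtain the key finite-$n$ identity
\[
\E{\tr(\A(\ket\phi\bra\phi))^{r}}=\frac{1}{d(d+1)\cdots(d+r-1)}\sum_{\alpha\in Sym(r)}\Q_{\A,r}(\alpha),
\]
and the analogous formula for $\A^{\ot n}$ with $d$ replaced by $D=d^{n}$.

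Next I would prove the multiplicativity $\Q_{\A^{\ot n},r}(\alpha)=\Q_{\A,r}(\alpha)^{n}$. Under the canonical identification $(\C^{d^{n}})^{\ot r}\cong((\C^{d})^{\ot r})^{\ot n}$, the permutation operator on $r$ copies of $\C^{d^{n}}$ equals the $n$-fold tensor power of the permutation operator on $r$ copies of $\C^{d}$, and $Choi(\A^{\ot n})$ equals $Choi(\A)^{\ot n}$ up to a reordering of tensor legs; combining these, the defining trace of $\Q_{\A^{\ot n},r}(\alpha)$ factors into $n$ identical copies. Writing $q_{\alpha}:=\Q_{\A,r}(\alpha)$ and $S_{n}:=\sum_{\alpha}q_{\alpha}^{n}$, I then get
\[
\frac{1}{n}\beta_{r}(\A^{\ot n})=\frac{1}{1-r}\left[\frac{1}{n}\log S_{n}-\frac{1}{n}\log\big(D(D+1)\cdots(D+r-1)\big)\right].
\]
Since $\tfrac{1}{n}\log\big(D(D+1)\cdots(D+r-1)\big)\to r\log d$ and $\tfrac{1}{1-r}<0$ (so that $\liminf$ of the bracket turns into a $\limsup$ of $\tfrac1n\log S_n$), the problem collapses to computing $\limsup_{n}\tfrac{1}{n}\log S_{n}$, giving $\beta_{r}^{\reg}(\A)=\tfrac{1}{1-r}\big[\limsup_{n}\tfrac{1}{n}\log S_{n}-r\log d\big]$.

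Third, I would analyze $S_{n}$. Note that $S_{n}=D(D+1)\cdots(D+r-1)\,\E{\tr(\A^{\ot n}(\ket\phi\bra\phi))^{r}}$ is a strictly positive real for every $n$, and that $Q:=\Qmax>0$. The bound $|S_{n}|\le r!\,Q^{n}$ gives $\limsup_{n}\tfrac{1}{n}\log S_{n}\le\log Q$ at once. For the matching lower bound I would isolate the maximizers $\mathcal{M}=\{\alpha:|q_{\alpha}|=Q\}$, write $q_{\alpha}=Qe^{i\theta_{\alpha}}$ on $\mathcal{M}$, and set $T_{n}=\sum_{\alpha\in\mathcal{M}}e^{in\theta_{\alpha}}$, so that $S_{n}/Q^{n}=T_{n}+o(1)$. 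The danger is destructive interference: the dominant terms might cancel, making $T_{n}$ small for every $n$. To rule this out I would use a second-moment (Ces\`aro) estimate, $\tfrac{1}{N}\sum_{n=1}^{N}|T_{n}|^{2}\to\#\{(\alpha,\beta)\in\mathcal{M}^{2}:\theta_{\alpha}=\theta_{\beta}\}\ge|\mathcal{M}|\ge1$, whence $\limsup_{n}|T_{n}|^{2}\ge1$; along a subsequence $|T_{n_{k}}|\ge\tfrac12$, so $\tfrac{1}{n_{k}}\log|S_{n_{k}}|\to\log Q$. This yields $\limsup_{n}\tfrac{1}{n}\log S_{n}=\log Q$ and hence part~(a). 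This lower bound is exactly the step I expect to be the main obstacle: because the $q_{\alpha}$ may be negative or genuinely complex, only the averaging/recurrence argument guarantees the rate $\log Q$ is attained along a subsequence, which is also the reason only a $\liminf$ (not a $\lim$) is claimed in general.

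Finally, parts~(b) and~(c) follow by removing the oscillation entirely. If the maximizer $\alpha^{\ast}$ is unique, then $S_{n}/Q^{n}=e^{in\theta_{\alpha^{\ast}}}+o(1)$, and positivity of $S_{n}$ forces $e^{in\theta_{\alpha^{\ast}}}\to1$, i.e. $q_{\alpha^{\ast}}=Q>0$; hence $S_{n}=Q^{n}(1+o(1))$ and $\tfrac{1}{n}\log S_{n}\to\log Q$ as a genuine limit, proving~(b). If $\A$ is entrywise positive, then $Choi(\A)$ has nonnegative entries and the operators ${\cal R}(\cdot)$ are nonnegative $0$--$1$ matrices, so every $q_{\alpha}=\Q_{\A,r}(\alpha)\ge0$; then $S_{n}$ has no cancellation, $|\mathcal{M}|\,Q^{n}\le S_{n}\le r!\,Q^{n}$, and $\tfrac{1}{n}\log S_{n}\to\log Q$ again as a genuine limit, proving~(c).
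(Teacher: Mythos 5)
Your proposal is correct, and at the top level it follows the same strategy as the paper: reduce $\beta_{r}(\A^{\ot n})$ to the finite sum $C_{d^{n},\,r}\sum_{\alpha\in Sym(r)}\Q_{\A,\,r}(\alpha)^{n}$ via the multiplicativity $\Q_{\A^{\ot n},\,r}(\alpha)=\Q_{\A,\,r}(\alpha)^{n}$, bound the sum above by $r!\,\Qmax^{n}$, and then exhibit a subsequence on which the maximizers do not cancel. However, both technical ingredients are supplied by genuinely different means. For the moment identity, you average $(\ket{\phi}\bra{\phi})^{\ot r}$ against the normalized symmetric-subspace projector and use the swap trick, whereas the paper expands matrix elements and uses Weingarten calculus, summing the Weingarten function over permutations to get the same constant $C_{k,\,r}=\prod_{j=0}^{r-1}(k+j)^{-1}$; these are equivalent derivations, and your relabeling caveat on $Sym(r)$ is harmless since the relabeling is a bijection that changes neither the sum nor $\Qmax$. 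The substantive divergence is the anti-cancellation step: you use a Ces\`aro second-moment (Weyl-type equidistribution) estimate, $\frac{1}{N}\sum_{n\le N}|T_{n}|^{2}\to\#\{(\alpha,\beta):\theta_{\alpha}=\theta_{\beta}\}\ge1$, to produce a subsequence with $|T_{n_{k}}|\ge\tfrac12$; the paper instead invokes simultaneous Diophantine approximation (the Dirichlet box principle) to find $n_{j}$ making all dominant phases simultaneously close to $1$, which yields the stronger lower bound $|T_{n_{j}}|\ge N/2$. Both suffice for part (a); your route is more elementary in that it needs no Diophantine input, while the paper's gives the sharper conclusion that all maximizing phases align, not merely that they fail to cancel. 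In part (b) you extract extra information the paper does not need — positivity of $S_{n}$ forces the unique maximizer to satisfy $\Q(\alpha^{*})=\Qmax>0$ — whereas the paper simply observes that a single term cannot self-interfere, so $|T_{n}|=1$ for every $n$; in part (c) the two arguments are essentially identical.
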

The evaluation of $\Q_{\max}$ seems to be a difficult problem in
general for large values of $r$. However for two special permutations
the quantity $\Q_{{\cal A},\, r}(\alpha)$ can be evaluated easily,
namely the identity permutation and the full cycle:

\begin{eqnarray*}
\Q_{{\cal A},\, r}(id) & = & \tr{\cal A}(\one)^{r},\quad\quad\Q_{{\cal A},\, r}(123\dots r)=\tr(Choi({\cal A})^{r})\,.
\end{eqnarray*}
Thus for $r=2$ the result can be stated more explicitly as follows.
\begin{cor}
Let ${\cal A}$ be a $d$-dimensional channel, then

\[
\beta_{2}^{\reg}(\A)=2\log d-\log\max[\tr{\cal A}(\one)^{2},\tr(Choi({\cal A})^{2})]
\]

\end{cor}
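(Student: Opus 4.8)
The plan is to obtain the corollary as a direct specialization of Theorem~\ref{thm:beta-reg}(a) to the case $r=2$. Setting $r=2$ in the formula $\beta_{r}^{\reg}(\A)=(r\log d-\log\Qmax)/(r-1)$ makes the denominator $r-1$ equal to $1$, so that $\beta_{2}^{\reg}(\A)=2\log d-\log\Qmax$. It then remains only to identify $\Qmax=\max_{\alpha\in Sym(2)}|\Q_{\A,2}(\alpha)|$ explicitly, which is easy because $Sym(2)$ is so small.

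First I would observe that $Sym(2)$ contains exactly two elements, the identity $id$ and the transposition $(12)$, and that $(12)$ is precisely the full cycle $123\dots r$ in the case $r=2$. Using the two explicit evaluations recalled immediately after Theorem~\ref{thm:beta-reg}, namely $\Q_{\A,2}(id)=\tr\A(\one)^{2}$ and $\Q_{\A,2}(12)=\tr(Choi(\A)^{2})$, the maximum over the whole group reduces to $\Qmax=\max[\,|\tr\A(\one)^{2}|,\,|\tr(Choi(\A)^{2})|\,]$.

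The one substantive step is to argue that the absolute value signs are redundant. Since $\A$ is a channel it is completely positive, hence positive, so $\A(\one)$ is a positive semidefinite operator; writing its eigenvalues as $\lambda_{i}\ge0$ gives $\tr\A(\one)^{2}=\sum_{i}\lambda_{i}^{2}\ge0$. Likewise $Choi(\A)$ is positive semidefinite by Choi's theorem characterizing complete positivity, so by the same eigenvalue argument $\tr(Choi(\A)^{2})\ge0$. Both arguments of the maximum are therefore non-negative, the absolute values may be dropped, and we obtain $\Qmax=\max[\tr\A(\one)^{2},\tr(Choi(\A)^{2})]$. Substituting this into $\beta_{2}^{\reg}(\A)=2\log d-\log\Qmax$ yields the claimed identity. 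Since the corollary is a pure specialization of the theorem, there is no genuine analytic obstacle here; the only point requiring care is this positivity observation, which rests on complete positivity of the channel and hence on positive semidefiniteness of both $\A(\one)$ and $Choi(\A)$.
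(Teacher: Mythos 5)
Your proof is correct and follows essentially the same route as the paper: the corollary is simply Theorem~\ref{thm:beta-reg}(a) at $r=2$, where $Sym(2)=\{\id,(12)\}$ and the two evaluations $\Q_{\A,2}(\id)=\tr\A(\one)^{2}$, $\Q_{\A,2}((12))=\tr(Choi(\A)^{2})$ give $\Qmax$ directly. Your extra observation that both quantities are non-negative (since $\A(\one)\geq0$ and $Choi(\A)\geq0$ by complete positivity), so the absolute values in the definition of $\Qmax$ can be dropped, is a worthwhile detail that the paper leaves implicit.
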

To make further progress we now focus on one of the simplest cases,
namely the qubit depolarizing channel $\Delta_{\lambda}$ \cite{ruskai_szarek_werner_cptp_analysis},
where we are able to prove a number of additional results. In particular
using concentration of measure arguments \cite{ledoux_measure_concentration,milman_schechtman_asymptotic_theory}
we compute the regularized quantity $\overline{S}_{r}^{\reg}(\Delta_{\lambda})$
-- which we call regularized output entropy in the remainder of the
paper -- for integer values of $r$, and for a range of values of
$\lambda$. One interesting consequence is that this quantity is a
non-analytic function of the depolarizing parameter $\lambda$. Recall
the definition of this channel:

\[
\Delta_{\lambda}(\rho)=\lambda\rho+\frac{1-\lambda}{2}\one
\]
The channel is completely positive for $-1/3\leq\lambda\leq1$ and
is entanglement breaking for $-1/3\leq\lambda\leq1/3$.
\begin{thm}
[proof in \ref{sub:average-output-entropy}]\label{thm:main} 

\ \begin{itemize}\item[(a)]For all $r\in\mathbb{N}$, $r\ge2$,
and $\lambda\in[0,1]$, 
\begin{align*}
\beta_{r}^{\reg}(\Delta_{\lambda}) & =\lim_{n\rightarrow\infty}\,\frac{1}{n}\,\beta_{r}(\Deltaln)=\min\left\{ 1,\frac{2r-\log\left[(1+3\lambda)^{r}+3(1-\lambda)^{r}\right]}{r-1}\right\} 
\end{align*}
in particular
\begin{align*}
\beta_{2}^{\reg}(\Delta_{\lambda}) & =\begin{cases}
1 & \lambda\leq1/\sqrt{3}\\
2-\log(1+3\lambda^{2}) & \lambda>1/\sqrt{3}
\end{cases}\,,\\
\beta_{\infty}^{\reg}(\Delta_{\lambda}) & =\begin{cases}
1 & \lambda\leq1/3\\
2-\log(1+3\lambda) & \lambda>1/3
\end{cases}\,,
\end{align*}
where $\beta_{\infty}^{\reg}(\Delta_{\lambda})=\lim_{r\to\infty}\,\beta_{r}^{\reg}(\Delta_{\lambda})$.

\item[(b)]For all $r\in\mathbb{N}$, $r\ge2$, and $\lambda\in J_{r}$,
\[
\Srreg(\Delta_{\lambda})=\beta_{r}^{\reg}(\Delta_{\lambda})\,,
\]
where $J_{r}=[0,c_{r}]\cup[d_{r},1]\subset[0,1]$ for some $0<c_{r}<d_{r}<1$
(see Table~\ref{tab:validity} in \ref{sub:average-output-entropy}).

\end{itemize}
\end{thm}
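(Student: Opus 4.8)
The plan is to deduce (a) from Theorem~\ref{thm:beta-reg} by evaluating $\Qmax$ for $\Delta_{\lambda}$ explicitly, and to deduce (b) by a concentration argument that forces the Jensen bound \eqref{eq:entropy-ineq} to be tight at the regularized level on $J_{r}$. For (a), first I would observe that for $\lambda\in[0,1]$ the matrix $Choi(\Delta_{\lambda})=2\lambda\ket\Phi\bra\Phi+\frac{1-\lambda}{2}\one$ (with $\ket\Phi$ the maximally entangled state) has only non-negative entries, so $\Delta_{\lambda}$ is entrywise positive. Theorem~\ref{thm:beta-reg}(c) then upgrades the $\liminf$ to a genuine limit, and Theorem~\ref{thm:beta-reg}(a) gives $\beta_{r}^{\reg}(\Delta_{\lambda})=\frac{r-\log\Qmax}{r-1}$, since $d=2$ and $\log d=1$. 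Everything then reduces to computing $\Qmax=\max_{\alpha\in Sym(r)}\Q_{\Delta_{\lambda},r}(\alpha)$.

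To evaluate $\Q_{\Delta_{\lambda},r}(\alpha)$ I would write $Choi(\Delta_{\lambda})=aP+b\one$ with $a=2\lambda\geq0$, $b=\frac{1-\lambda}{2}\geq0$ and $P=\ket\Phi\bra\Phi$, expand $Choi(\Delta_{\lambda})^{\ot r}$ over the subsets $S\subseteq\{1,\dots,r\}$ of tensor slots carrying the projector $P$, and apply the cycle identity $\tr[(A_{1}\ot\cdots\ot A_{r})\mathcal{R}(\sigma)]=\prod_{\text{cycles of }\sigma}\tr[\cdots]$ separately to the output factors (permuted by the full cycle) and to the input factors (permuted by $\alpha$). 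Each term collapses to a power $2^{E(S,\alpha)}$ whose exponent is a cycle count depending only on how $\alpha$ sits relative to the full cycle; summing over $S$ yields a closed polynomial in $\lambda$ for every $\alpha$, and in particular recovers $\Q_{\Delta_{\lambda},r}(id)=\tr\Delta_{\lambda}(\one)^{r}=2$ and $\Q_{\Delta_{\lambda},r}(123\dots r)=\tr(Choi(\Delta_{\lambda})^{r})=2^{-r}[(1+3\lambda)^{r}+3(1-\lambda)^{r}]$. The main obstacle is the ensuing combinatorial optimization: one must show that for every $\lambda\in[0,1]$ no intermediate permutation beats these two, i.e.\ $\Qmax=\max\{2,\,\tr(Choi(\Delta_{\lambda})^{r})\}$. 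I expect to prove this by bounding the exponents $E(S,\alpha)$, showing that displacing $\alpha$ away from both the identity and the full cycle can only lower the governing cycle counts. Granting this, $\log\Qmax=\max\{1,\,\log(2^{-r}[(1+3\lambda)^{r}+3(1-\lambda)^{r}])\}$, and since $\Qmax$ enters with a minus sign the two branches give exactly $\beta_{r}^{\reg}=\min\{1,\frac{2r-\log[(1+3\lambda)^{r}+3(1-\lambda)^{r}]}{r-1}\}$; the formula for $\beta_{2}^{\reg}$ is the case $r=2$, and $\beta_{\infty}^{\reg}$ follows by keeping the dominant term $(1+3\lambda)^{r}$ as $r\to\infty$.

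For (b), \eqref{eq:entropy-ineq} already gives $\Srreg(\Delta_{\lambda})\geq\beta_{r}^{\reg}(\Delta_{\lambda})$, so only the reverse inequality on $J_{r}$ is at issue. Writing $X_{n}(\phi)=\Tr{\Deltaln(\ket\phi\bra\phi)^{r}}$ and $D=2^{n}$, we have $\beta_{r}(\Deltaln)=\frac{1}{1-r}\log\E{X_{n}}$ and $\overline{S}_{r}(\Deltaln)=\frac{1}{1-r}\E{\log X_{n}}$. I would apply Levy's lemma to $\phi\mapsto X_{n}(\phi)$, whose Lipschitz constant $L_{n}$ is controlled by $\max_{\phi}\|\Deltaln(\ket\phi\bra\phi)\|_{\infty}^{r-1}=(\frac{1+\lambda}{2})^{n(r-1)}$ (the worst case being a product input), to obtain $\Pr[X_{n}<\frac{1}{2}\E{X_{n}}]\leq2\exp(-cD(\E{X_{n}})^{2}/L_{n}^{2})$. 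Splitting $\E{\log X_{n}}$ over the typical event $\{X_{n}\geq\frac{1}{2}\E{X_{n}}\}$ and its complement, and using $S_{r}\leq\log D=n$ on the complement, yields $\overline{S}_{r}(\Deltaln)\leq\beta_{r}(\Deltaln)+O(1)+n\Pr[\text{bad}]$; dividing by $n$ and letting $n\to\infty$ gives $\Srreg\leq\beta_{r}^{\reg}$ precisely when $\Pr[\text{bad}]\to0$. The main obstacle here is to make the threshold sharp: one must verify the worst-case operator-norm bound and then solve the requirement that the exponential rate $1-2(r-1)\beta_{r}^{\reg}(\Delta_{\lambda})-2(r-1)\log\frac{1+\lambda}{2}$ be strictly positive. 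This rate is positive for small and for large $\lambda$ but dips below zero in between, and its two zeros are exactly the endpoints $c_{r},d_{r}$ of $J_{r}=[0,c_{r}]\cup[d_{r},1]$ recorded in Table~\ref{tab:validity}.
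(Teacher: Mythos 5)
Your top-level plan coincides with the paper's: check entrywise positivity of $Choi(\Delta_{\lambda})$, invoke Theorem~\ref{thm:beta-reg}(a) and (c), and reduce part (a) to the claim $\Qmax=\max\{\Q(\id),\Q((1\dots r))\}=\max\{2,\tr(Choi(\Delta_{\lambda})^{r})\}$. But that claim is precisely where all the work lies, and you leave it as an expectation (``I expect to prove this by bounding the exponents $E(S,\alpha)$\dots''). The paper devotes all of Section~\ref{sub:maximal-q} to it: the two-rail diagram formalism, the factorization $\Q(\alpha)=2^{-(s-1)}\prod_{i}\Q(\alpha_{i})$ for non-overlapping permutations (Lemma~\ref{lem:q-factoring}), a convexity argument in the cycle lengths showing that among non-overlapping permutations only $\id$ and the full cycle can maximize $\Q$ (Lemma~\ref{lem:non-overlap-max}), and a separate tile-permutation argument showing each conjugacy class attains its maximum at a non-overlapping representative (Lemma~\ref{lem:q-conjugacy-class}), combined in Proposition~\ref{pro:maximal-q}. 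Your expansion of $Choi(\Delta_{\lambda})^{\ot r}$ over subsets carrying the projector is a plausible starting point, but ``displacing $\alpha$ away from both the identity and the full cycle can only lower the governing cycle counts'' is an assertion, not a proof: for intermediate $\lambda$ the sum mixes terms with competing exponents, and controlling exactly that interpolation regime is what the paper's convexity-plus-conjugacy machinery is for. As written, part (a) has a genuine gap at its core.

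Part (b) has two problems. First, on the lower interval $[0,c_{r}]$ the paper uses no concentration at all: there $\Qmax=2$, so $\beta_{r}^{\reg}(\Delta_{\lambda})=1$ is the maximal possible normalized entropy, and the sandwich $a_{n}\le b_{n}\le1$ (Jensen plus $\overline{S}_{r}(\Deltaln)\le n$) immediately gives $\Srreg=1$. Your concentration rate goes negative strictly before $c_{r}$: for $r=2$ your rate $1-2\bigl(1+\log\frac{1+\lambda}{2}\bigr)$ vanishes at $\lambda=\sqrt{2}-1\approx0.414$, well below $c_{2}=1/\sqrt{3}\approx0.577$, so your argument only covers $[0,\sqrt{2}-1]$ and your claim that your two zeros ``are exactly'' the $c_{r},d_{r}$ of Table~\ref{tab:validity} is false. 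Second, your Lipschitz bound $L_{n}\lesssim r\,\max_{\phi}\|\Deltaln(\ket{\phi}\bra{\phi})\|_{\infty}^{r-1}=r\fracp{1+\lambda}{2}^{n(r-1)}$ presupposes that the maximal output $\infty$-norm of $\Deltaln$ is attained on product inputs, i.e.\ multiplicativity of the maximal output $\infty$-norm for the depolarizing channel. That is true but nontrivial (it is a theorem of King), and you neither prove nor cite it; the paper instead proves the elementary bound $\eta\le\sqrt{2}\,r\kappa^{n}$ with $\kappa=\lambda+\frac{1-\lambda}{\sqrt{2}}$ (Proposition~\ref{pro:lipschitz-constant}) by composing four explicitly bounded Lipschitz maps. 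If you do import the multiplicativity theorem, your upper threshold actually improves on the table's (roughly $0.676$ versus $d_{2}=0.732$ at $r=2$), which again shows your thresholds cannot coincide with the quoted ones; the theorem as stated (existence of some $0<c_{r}<d_{r}<1$) would survive, but only after you repair the lower branch with the sandwich argument and justify the $\infty$-norm bound.
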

\medskip{}

From the explicit form given in (a) it is clear that $\beta_{r}^{\reg}(\Delta_{\lambda})$
does not have a continuous first derivative for some $\lambda_{r}\in[1/3,1/\sqrt{3}]$,
in particular it is non-analytic. Because $\overline{S}_{r}^{\reg}(\Delta_{\lambda})$
is defined as the $\liminf$ of a series upper bounded by 1 it is
well defined for all $0\leq\lambda\leq1$. Furthermore we know from
(b) that it is equal to 1 for some range $\lambda\in[0,c_{r}]$ but
at $\lambda=1$ its value is 0. Therefore $\overline{S}_{r}^{\reg}(\Delta_{\lambda})$
has least one non-analytic point somewhere in the range $\lambda\in[c_{r},d_{r}]$.

It is tempting to associate this non-analyticity with a transition
between distinct phases of the model, but the operational meaning
of this is unclear at the moment. We conjecture that the quantities
$\Srreg(\Delta_{\lambda})$ and $\beta_{r}^{\reg}(\Delta_{\lambda})$
are in fact equal for all $\lambda$, and for all $r\ge1$. It is
noteworthy that the channel $\Delta_{\lambda}$ is entanglement-breaking
at and below the value $\lambda=1/3$.

\medskip{}
Counterexamples to the additivity conjecture have been found so far
by using randomization techniques \cite{hayden_leung_winter_aspectsofentanglement,fukuda_king_random_subspaces,aubrun_szarek_werner_additivity_dvoretzky,hastings_counterexample}.
This has led to an understanding of the behavior of a typical high-dimensional
channel, at least insofar as it affects the minimal output Renyi entropy,
by proving the generic existence of channels all of whose output states
have high entropy. Here we look from a different point of view, by
considering the properties of a typical output state for a product
of many copies of a channel. Open questions remain, for example the
amount of entanglement in a typical output state. We note that the
questions addressed here have a different flavor from arguments based
on locality, since here the system is fully entangled across all copies.

\section{Main result}

\subsection{Notation}

We work with a general channel $\A$ and its tensor product $\Ccal=\A^{\otimes n}$.
The dimension of $\A$ is
\[
d=\dim\A\,
\]
To achieve the results for $\Srreg$ and $\beta_{r}^{\reg}$ in Theorems~\ref{thm:beta-reg-eb},
\ref{thm:beta-reg} and \ref{thm:main} we first find a closed expression
for the average moments for integer $r\ge2$
\begin{equation}
M_{r}(\Ccal)=\E{\Tr{\Ccal(\ket{\phi}\bra{\phi})^{r}}}\,,\label{eq:trace-moment}
\end{equation}
where averaging is over random pure input states $\ket{\phi}=U\ket 0$
with $U$ distributed according to the Haar measure on $SU(d^{n})$.
We then use the relation
\[
\beta_{r}^{\reg}(\A)=\liminf_{n\to\infty}\frac{1}{n(1-r)}\log M_{r}(\A^{\otimes n})
\]

\subsection{Evaluating trace moments}

We rewrite the trace moment \eqref{eq:trace-moment} by inserting
four complete sums that run over the entire input space 
\begin{align}
 & \E{\Tr{\Ccal(\ket{\phi}\bra{\phi})^{r}}}\nonumber \\
 & =\e\Bigg[\tr\bigg(\sum_{a,b,x,y}\ket a\bra a\Ccal\left(\vphantom{\sum}\ket x\braket x{\phi}\braket{\phi}y\bra y\right)\ket b\bra b\bigg)^{r}\Bigg]\nonumber \\
 & =\e\Bigg[\sum_{\substack{\{a_{i},x_{i},y_{i}\}\\
i=1\dots r
}
}\prod_{i=1}^{r}\Ccal_{a_{i}x_{i}y_{i}a_{i+1}}\braket{x_{i}}{\phi}\braket{\phi}{y_{i}}\Bigg]\nonumber \\
 & =\sum_{\substack{\{a_{i},x_{i},y_{i}\}\\
i=1\dots r
}
}\prod_{i=1}^{r}\Ccal_{a_{i}x_{i}y_{i}a_{i+1}}\e\bigg[\prod_{j=1}^{r}\braket{x_{j}}{\phi}\braket{\phi}{y_{j}}\bigg]\label{eq:target-sum}
\end{align}
with the identification $a_{r+1}\equiv a_{1}$, and with 
\[
\Ccal_{axyb}=\bra a\Ccal(\ket x\bra y)\ket b
\]
the matrix elements of the channel. The expectation value in \eqref{eq:target-sum}
\begin{equation}
\e\bigg[\prod_{j=1}^{r}\braket{x_{j}}{\phi}\braket{\phi}{y_{j}}\bigg]=\e\bigg[\prod_{j=1}^{r}\bra{x_{j}}U\ket 0\bra 0U^{*}\ket{y_{j}}\bigg]\label{eq:expectation-value-wein}
\end{equation}
of products of matrix elements of unitaries distributed according
to the Haar measure may be calculated using Weingarten calculus \cite{weingarten}.
The Weingarten function $\mathrm{Wg}:\N\times Sym(r)\to\R$ maps pairs
of dimension $k$ and elements of the symmetric group $Sym(r)$ into
the reals. The general expression is 
\begin{align*}
 & \E{U_{i_{1}j_{1}}\dots U_{i_{r}j_{r}}\overline{U_{i_{1}'j_{1}'}}\dots\overline{U_{i_{r}'j_{r}'}}}\\
 & =\sum_{\alpha,\beta\in Sym(r)}\delta_{i_{\alpha(1)}i_{1}'}\dots\delta_{i_{\alpha(r)}i_{r}'}\delta_{j_{\beta(1)}j_{1}'}\dots\delta_{j_{\beta(r)}j_{r}'}\Wg{k,\beta^{-1}\alpha}\,.
\end{align*}
Thus \eqref{eq:expectation-value-wein} simplifies to
\begin{align}
\e\bigg[\prod_{j=1}^{r}\bra{x_{j}}U\ket 0\bra 0U^{*}\ket{y_{j}}\bigg] & =\sum_{\alpha,\beta\in Sym(r)}\delta_{x_{\alpha(1)}y_{1}}\dots\delta_{x_{\alpha(r)}y_{r}}\Wg{k,\beta\alpha^{-1}}\\
 & =\sum_{\alpha\in Sym(r)}\delta_{x_{\alpha(1)}y_{1}}\dots\delta_{x_{\alpha(r)}y_{r}}C_{k,\, r}\label{eq:evaluated-target-expectation}
\end{align}
where $k$ is the input dimension for ${\cal C}$, and in the last
step the Weingarten function is summed over all permutations $\gamma=\beta\alpha^{-1}$.
This sum can be evaluated explicitly, as was shown for example in
\cite{collins_nechita_graphicalcalculus}: 
\begin{equation}
C_{k,\, r}=\sum_{\gamma\in Sym(r)}\Wg{k,\gamma}=\prod_{j=0}^{r-1}\frac{1}{k+j}\,.\label{eq:comb-factor}
\end{equation}
\medskip{}
We plug the evaluated expectation value \eqref{eq:evaluated-target-expectation}
in our original expression \eqref{eq:target-sum} and get 
\begin{align}
\sum_{\substack{\{a_{i},x_{i},y_{i}\}\\
i=1\dots r
}
} & \bigg(\prod_{i=1}^{r}\Ccal_{a_{i}x_{i}y_{i}a_{i+1}}\cdot\sum_{\alpha\in Sym(r)}\prod_{j=1}^{r}\delta_{x_{\alpha(j)}y_{j}}\cdot C_{k,\, r}\bigg)\nonumber \\
 & =C_{k,\, r}\sum_{\substack{\{a_{i},x_{i}\}\\
i=1\dots r
}
}\sum_{\alpha\in Sym(r)}\prod_{i=1}^{r}\Ccal_{a_{i}x_{i}x_{\alpha(i)}a_{i+1}}\,.\label{eq:trace-moments-simplified}
\end{align}
Now define
\begin{equation}
\Q_{\Ccal,\, r}(\alpha)=\sum_{\substack{\{a_{i},x_{i}\}\\
i=1\dots r
}
}\prod_{i=1}^{r}\Ccal_{a_{i}x_{i}x_{\alpha(i)}a_{i+1}}=\sum_{\substack{\{x_{i}\}\\
i=1\dots r
}
}\tr\prod_{i=1}^{r}\Ccal_{x_{i}x_{\alpha(i)}}\,,\label{eq:q-definition}
\end{equation}
where the matrices $\Ccal_{xy}$ have entries $(\Ccal_{xy})_{ab}=\Ccal_{axyb}$
or more simply $\Ccal_{xy}=\Ccal(\ket x\bra y)$. If it is clear from
context we may omit one or both subscripts $\Q(\alpha)=\Q_{\Ccal}(\alpha)=\Q_{\Ccal,\, r}(\alpha)$.
These are the terms we will analyze to a great length in the rest
of the work.

The $\Q$ defined in this way is identical to the one from Definition~\ref{def:q}.
This can be seen from the following calculation

\begin{align*}
 & \Tr{Choi(\Ccal)^{\ot r}({\cal R}(123\dots r)\ot{\cal R}(\alpha))}\\
= & \sum_{\{a,x\}}\bra{a_{1},x_{1}}\ot\cdots\ot\bra{a_{r},x_{r}}Choi(\Ccal)^{\ot r}\ket{a_{2},x_{\alpha(1)}}\ot\cdots\ot\ket{a_{1},x_{\alpha(r)}}\\
= & \sum_{\{a,x\}}\bra{a_{1}}\Ccal(\ket{x_{1}}\bra{x_{\alpha(1)}})\ket{a_{2}}\bra{a_{2}}\Ccal(\ket{x_{2}}\bra{x_{\alpha(2)}})\ket{a_{3}}\cdots\\
= & \sum_{\substack{\{a_{i},x_{i}\}\\
i=1\dots r
}
}\prod_{i=1}^{r}\Ccal_{a_{i}x_{i}x_{\alpha(i)}a_{i+1}}\\
= & \Q_{\Ccal,\, r}(\alpha)\,.
\end{align*}
In terms of the $\Q_{\Ccal}$ we get our final working expression
for the average moments
\[
M_{r}({\cal C})=C_{k,\, r}\sum_{\alpha\in Sym(r)}\Q_{\Ccal}(\alpha)\,.
\]
In general $\Q_{\Ccal}(\alpha)$ could have complex values. However
in some cases it can be shown to be real. In particular for the depolarizing
channel it follows directly from $\Ccal_{axyb}\geq0$ that $\Q_{\Ccal}(\alpha)$
is positive.

\subsection{Product channels}

In the case where $\Ccal$ is a tensor product $\Ccal=\D\otimes\Ecal$
we work in the product base $\ket x=\ket{x'x''}$. Now the tensor
and channel application are interchangeable
\begin{align*}
\D\otimes\Ecal_{xy} & =\D\otimes\Ecal(\ket{x'x''}\bra{y'y''})\\
 & =\D(\ket{x'}\bra{y'})\otimes\Ecal(\ket{x''}\bra{y''})\\
 & =\D_{x'y'}\otimes\Ecal_{x''y''}\,.
\end{align*}
And therefore, the $\Q_{\D\otimes\Ecal}(\alpha)$ factors
\begin{align}
\Q_{\D\otimes\Ecal}(\alpha) & =\sum_{\substack{\{x_{i}\}\\
i=1\dots r
}
}\tr\prod_{i=1}^{r}\D\otimes\Ecal_{x_{i}x_{\alpha(i)}}\nonumber \\
 & =\sum_{\substack{\{x_{i}',x_{i}''\}\\
i=1\dots r
}
}\tr\prod_{i=1}^{r}\D_{x_{i}'x_{\alpha(i)}'}\otimes\Ecal_{x_{i}''x_{\alpha(i)}''}\nonumber \\
 & =\sum_{\substack{\{x_{i}'\}\\
i=1\dots r
}
}\tr\prod_{i=1}^{r}\D_{x_{i}'x_{\alpha(i)}'}\sum_{\substack{\{x_{i}''\}\\
i=1\dots r
}
}\tr\prod_{i=1}^{r}\Ecal_{x_{i}''x_{\alpha(i)}''}\nonumber \\
 & =\Q_{\D}(\alpha)\Q_{\Ecal}(\alpha)\,.\label{eq:q-factoring}
\end{align}

\subsection{Average moments of $\An$\label{sub:Average-moments}}

If we set $\Ccal=\An$ the average moment factors as above, the dimension
is $k=d^{n}$ and according to \eqref{eq:q-factoring} we get
\begin{equation}
M_{r}(\A^{\otimes n})=C_{d^{n},\, r}\sum_{\alpha\in Sym(r)}\Q_{\A,\, r}(\alpha)^{n}\,.\label{eq:a-n-average-moments-evaluated}
\end{equation}
When the meaning is clear from the context we suppress the index in
$\Q_{\A}$ . The limiting behavior of this sum is relatively simple
and determines the quantity $\beta_{r}^{\reg}(\A)$ as described in
the following theorem.

\theoremstyle{plain}\newtheorem*{repeat_thm_br}{Theorem \ref{thm:beta-reg}}

\begin{repeat_thm_br}

\begin{itemize}\item[(a)]Let $r\geq2$ be an integer and let $\Q_{\max}=\max_{\alpha\in Sym(r)}|\Q_{{\cal A},\, r}(\alpha)|$.
Then 
\[
\beta_{r}^{\reg}(\A)=\frac{r\log d-\log\Q_{\max}}{r-1}\,.
\]

\item[(b)]Let $r\ge2$ be an integer. If the maximum $\Qmax$ is
attained for a unique $\alpha$ then the $\liminf$ in $\beta_{r}^{\reg}$
can be replaced with a regular limit:

\[
\beta_{r}^{\reg}(\A)=\lim_{n\to\infty}\frac{1}{n}\beta_{r}(\A^{\ot n})\,.
\]

\item[(c)]If the channel $\A$ is entrywise positive then for all
integer $r\geq2$
\[
\beta_{r}^{\reg}(\A)=\lim_{n\to\infty}\frac{1}{n}\beta_{r}(\A^{\ot n})\,.
\]

\end{itemize}

\end{repeat_thm_br}
\begin{proof}
(a) From the definition of \eqref{eq:def-regs} and \eqref{eq:a-n-average-moments-evaluated}
we have
\begin{align*}
\beta_{r}^{\reg}(\A) & =\liminf_{n\to\infty}\frac{1}{n(1-r)}\log M_{r}(\A^{\ot n})\\
 & =\liminf_{n\to\infty}\frac{1}{n(1-r)}\log\biggl[C_{d^{n},\, r}\sum_{\alpha}\Q(\alpha)^{n}\biggr]
\end{align*}
Note that
\begin{align*}
C_{d^{n},\, r}\sum_{\alpha}\Q(\alpha)^{n} & =\biggl|C_{d^{n},\, r}\sum_{\alpha}\Q(\alpha)^{n}\biggr|\\
 & \leq C_{d^{n},\, r}\sum_{\alpha}|\Q(\alpha)|^{n}\\
 & \leq d^{-nr}\sum_{\alpha}\Qmax^{n}\\
 & =r!\, d^{-nr}\Qmax^{n}
\end{align*}

Since $M_{r}(\A^{\ot n})\leq1$ we have
\begin{align*}
\beta_{r}^{\reg}(\A) & =\frac{1}{r-1}\liminf_{n\to\infty}\frac{1}{n}\log\biggl[C_{d^{n},\, r}\sum_{\alpha}\Q(\alpha)^{n}\biggr]^{-1}\\
 & \geq\frac{1}{r-1}\liminf_{n\to\infty}\frac{1}{n}\log\left(r!\, d^{-nr}\Qmax^{n}\right)^{-1}\\
 & =\frac{1}{r-1}\log\left(d^{r}\Qmax^{-1}\right)\\
 & =\frac{r\log d-\log\Qmax}{r-1}\,.
\end{align*}

In order to prove equality, we will use the existence of a subsequence
$\{n_{j}\}$ such that
\begin{equation}
\lim_{j\to\infty}\frac{1}{n_{j}(1-r)}\log\biggl[C_{d^{n_{j}},\, r}\sum_{\alpha}\Q(\alpha)^{n_{j}}\biggr]\leq\frac{r\log d-\log\Qmax}{r-1}\,.\label{eq:subsequence-inequality}
\end{equation}
To this end, let $\{\alpha_{1},\dots,\alpha_{N}\}$ be the maximizers
satisfying $|\Q(\alpha_{i})|=\Qmax$, so that $\Q(\alpha_{i})=\Qmax e^{2\pi i\gamma_{i}}$
for some $\gamma_{i}\in[0,1)$. It is a basic result from simultaneous
Diophantine approximations (using the Dirichlet box principle) that
for any $\epsilon>0$ there is an increasing sequence of positive
integers $n_{j}$ such that $ $$\max_{i}\{n_{j}\gamma_{i}\}<\epsilon$
for all $j$, where $\{x\}$ denotes the distance to the closest integer.
Choose $\epsilon=1/6$, then we have
\begin{align}
\bigg|\sum_{i=1}^{N}\Q(\alpha_{i})^{n_{j}}\bigg| & =\Q_{\max}^{n_{j}}\bigg|\sum_{i=1}^{N}e^{2\pi in_{j}\gamma_{i}}\bigg|\nonumber \\
 & \geq\Q_{\max}^{n_{j}}\cdot\frac{N}{2}\,.\label{eq:q-sum-lower-bound}
\end{align}
Furthermore, there is $\Theta<1$ such that
\[
|\Q(\alpha')|\le\Qmax\cdot\Theta
\]
for all $\alpha'$ which are not maximizers. Thus
\begin{align*}
C_{d^{n_{j}},\, r}\sum_{\alpha}\Q(\alpha)^{n_{j}} & =C_{d^{n_{j}},\, r}\biggl|\sum_{\alpha}\Q(\alpha)^{n_{j}}\biggr|\\
 & \ge C_{d^{n_{j}},\, r}\Qmax^{n_{j}}\cdot\frac{N}{2}-C_{d^{n_{j}},\, r}\biggl|\sum_{\alpha'}\Q(\alpha')^{n_{j}}\biggr|\\
 & \ge C_{d^{n_{j}},\, r}\Qmax^{n_{j}}\left(\frac{N}{2}-r!\,\Theta^{n_{j}}\right)
\end{align*}
Since $\Theta<1$, for $j$ sufficiently large we have $\frac{N}{2}-r!\Theta^{n_{j}}\geq\frac{1}{3}$,
thus for $j$ sufficiently large
\begin{align*}
\frac{1}{n_{j}(1-r)} & \biggl[\log C_{d^{n_{j}},\, r}\,\sum_{\alpha}\Q(\alpha)^{n_{j}}\biggr]\\
 & \le\frac{1}{r-1}\frac{1}{n_{j}}\log\left(3\Qmax^{-n_{j}}\, C_{d^{n_{j}},\, r}^{-1}\right)\,.
\end{align*}
Using $\lim_{n\to\infty}\left(d^{nr}C_{d^{n},\, r}\right)=1$, the
result follows immediately.\medskip{}

(b) Let $\alpha_{0}\in Sym(r)$ be the unique permutation satisfying
$|\Q(\alpha_{0})|=\Qmax$, so that $N=1$ in the notation of (a).
Then \eqref{eq:q-sum-lower-bound} is replaced by the equality
\[
\left|\Q(\alpha_{0})^{n}\right|=\Qmax^{n}
\]
which holds for every $n$. Thus the inequality \eqref{eq:subsequence-inequality}
is true for every $n$, hence the upper and lower bound yield the
existence of the limit.\medskip{}

(c) If $\A$ is entrywise positive then every term $\Q(\alpha)$ is
also positive. Thus the inequality \eqref{eq:q-sum-lower-bound} is
replaced by
\[
\biggl|\sum_{i=1}^{N}\Q(\alpha_{i})^{n}\biggr|=N\Qmax^{n}
\]
and this holds for every $n$. Thus again \eqref{eq:subsequence-inequality}
is true for every $n$, and the result follows.
\end{proof}

\subsection{Entanglement breaking channels\label{sub:Entanglement-breaking-channels}}

After dealing with basic facts about $\beta_{r}^{\reg}$ we turn our
attention to the special case of entanglement breaking channels. In
this case calculating the relevant $\Q$-terms and studying their
properties is particularly easy. We restate and then prove Theorem~\ref{thm:beta-reg-eb}
from the introduction.

\theoremstyle{plain}\newtheorem*{repeat_thm_eb}{Theorem \ref{thm:beta-reg-eb}}

\begin{repeat_thm_eb}

\begin{itemize} \item[(a)]Let $r\geq2$ be an integer, and $\A(\rho)=\sum\sigma_{k}\Tr{X_{k}\rho}$
an entanglement breaking channel satisfying the condition $\Tr{\prod_{i=1}^{r}\sigma_{k_{i}}}\geq0$
for all choices of $\{k_{i}\}$. Then

\[
\beta_{r}^{\reg}(\A)=\lim_{n\to\infty}\frac{1}{n}\beta_{r}(\A^{\ot n})=\frac{1}{1-r}\log\tr({\cal A}(\one/d)^{r})\,.
\]

\item[(b)]Let ${\cal A}$ be a d-dimensional unital entanglement-breaking
channel, $\A(\one)=\one$, satisfying the condition $\Tr{\prod_{i=1}^{m}\sigma_{k_{i}}}$
for all integer $m\geq2$. Then for all real $r\geq1$
\[
\Srreg(\A)=\beta_{r}^{\reg}(\A)=\log d\,.
\]

\end{itemize}

\end{repeat_thm_eb}
\begin{proof}
(a) The result follows immediately from Lemma~\ref{lem:ebc} and
Theorem~\ref{thm:beta-reg}.

(b) For a unital entanglement breaking channel $\beta_{r}^{\reg}(\A)$
attains its maximal value $\log d$ which is at the same time the
maximal possible value of $\Srreg(\A)$. So, with inequality~\eqref{eq:entropy-ineq}
it follows that $\Srreg(\A)=\beta_{r}^{\reg}(\A)=\log d$ for all
integer $r\geq2$. In the following we extend this result to all real
$r\geq1$.

Consider the derivatives of the function $f_{n}(r)=\frac{1}{n}\ln\E{\tr\left(\A^{\ot n}(\ket{\phi}\bra{\phi})^{r}\right)}/\ln2$
with respect to $r$, and set $\rho=\A^{\ot n}(\ket{\phi}\bra{\phi})$
\begin{align}
f_{n}'(r) & =\frac{1}{n}\E{\tr\rho^{r}}^{-1}\E{\Tr{\rho^{r}\ln\rho}}/\ln2<0\nonumber \\
f_{n}''(r) & =\frac{1}{n}\E{\tr\rho^{r}}^{-1}\E{\Tr{\rho^{r}(\ln\rho)^{2}}}/\ln2-\frac{1}{n}\E{\tr\rho^{r}}^{-2}\E{\Tr{\rho^{r}\ln\rho}}^{2}/\ln2>0\label{eq:second-inequality}
\end{align}
To prove the second inequality we use two applications of the Cauchy-Schwarz
inequality. First $|\Tr{AB}|\le\Tr{A^{2}}^{1/2}\Tr{B^{2}}^{1/2}$
with $A=\rho^{r/2}$ and $B=\rho^{r/2}\ln\rho$, and then $\E{XY}^{2}\le\E{X^{2}}\E{Y^{2}}$
to deduce

\[
\E{\Tr{\rho^{r}\ln\rho}}^{2}\le\E{(\Tr{\rho^{r}(\ln\rho)^{2}})^{1/2}\,(\tr\rho^{r})^{1/2}}^{2}\le\E{\Tr{\rho^{r}(\ln\rho)^{2}}}\,\E{\tr\rho^{r}}
\]
Therefore, the function $f_{n}(r)$ is convex in $r$ for $r>0$.
We also know that $f_{n}(r)\geq(1-r)\,\log d$ for any real $r\geq1$
and $\lim_{n\to\infty}f_{n}(r)=(1-r)\,\log d$ for any integer $r\geq1$.
Therefore, for any integer $r_{0}\geq1$ and $r\in[r_{0},r_{0}+1]$
we have upper and lower bounds
\begin{equation}
f_{n}(r_{0})(r_{0}+1-r)+f_{n}(r_{0}+1)(r-r_{0})\geq f_{n}(r)\geq(1-r)\,\log d\,.\label{eq:fn-bound}
\end{equation}
Thus we have $\lim_{n\to\infty}f_{n}(r)=(1-r)\,\log d$ for all real
$r\geq1$. Dividing by $1-r$ gives $\lim_{n\to\infty}\frac{1}{n}S_{r}(\A^{\ot n})=\log d$
the desired equality for all $r>1$.

Finally, for $r=1$ the Renyi entropy is defined as the Neumann entropy,
which equals the limit $\frac{1}{n}S_{1}(\A^{\ot n})=\lim_{r\to1}\frac{f_{n}(r)}{1-r}=-f_{n}'(1)$.
Again, using the bound \eqref{eq:fn-bound} with $r_{0}=1$ (and noting
that $f_{n}(1)=0$) we get $-f_{n}(2)\,(1-r)\geq f_{n}(r)\geq(1-r)\,\log d$
and so
\[
-f_{n}(2)\leq\frac{f_{n}(r)}{1-r}\leq\log d
\]
which implies $-f_{n}(2)\leq\frac{1}{n}S_{1}(\A^{\ot n})\leq\log d$.
But as $\lim_{n\to\infty}-f_{n}(2)=\log d$ both sides of the bound
become equal and we have $\lim_{n\to\infty}\frac{1}{n}S_{1}(\A^{\ot n})=\log d$
as well.\end{proof}
\begin{lem}
\label{lem:ebc}Let $\A$ be an entanglement breaking channel with
\begin{equation}
\Tr{\prod_{i=1}^{r}\sigma_{k_{i}}}\geq0\label{eq:ebc-condition}
\end{equation}
 for all choices of $\{k_{i}\}$, then $\alpha=\id$ is the unique
maximum of $Q_{\A,\, r}$ so $\Qmax=\Q_{\A,\, r}(\id)=\Tr{\A(\one)^{r}}$.\end{lem}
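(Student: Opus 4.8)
The plan is to use the separable structure of the Choi matrix of an entanglement-breaking channel to reduce $\Q_{\A,r}(\alpha)$ to a sum of products of traces indexed by the cycles of $\alpha$, and then to dominate every such product by its value at $\alpha=\id$ using a trace inequality for positive semidefinite matrices. Concretely, I would substitute the matrix elements $\A_{xy}=\A(\ket x\bra y)=\sum_{k}\sigma_{k}\bra y X_{k}\ket x$ into the defining expression $\Q_{\A,r}(\alpha)=\sum_{\{x_{i}\}}\tr\prod_{i=1}^{r}\A_{x_{i}x_{\alpha(i)}}$, expand the product over the POVM index, and carry out the sums over $x_{1},\dots,x_{r}$ using $\sum_{x}\ket x\bra x=\one$. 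Each closed loop of ket/bra contractions then runs along a cycle of $\alpha$, so the $x$-sum factorizes over the cycle structure, yielding
\[
\Q_{\A,r}(\alpha)=\sum_{\{k_{i}\}}\Tr{\prod_{i=1}^{r}\sigma_{k_{i}}}\,\prod_{\tau}\Tr{\prod_{j\in\tau}X_{k_{j}}},
\]
where $\tau$ ranges over the cycles of $\alpha$ and the inner product is taken in the cyclic order of $\tau$. Specializing to $\alpha=\id$, every cycle is a fixed point, the inner traces collapse to $\tr X_{k_{i}}$, and since $\A(\one)=\sum_{k}(\tr X_{k})\,\sigma_{k}$ one recognizes $\Q_{\A,r}(\id)=\sum_{\{k_{i}\}}\Tr{\prod_{i}\sigma_{k_{i}}}\prod_{i}\tr X_{k_{i}}=\Tr{\A(\one)^{r}}$.

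Next I would establish the elementary bound $\bigl|\Tr{A_{1}\cdots A_{\ell}}\bigr|\le\prod_{j}\tr A_{j}$ for positive semidefinite $A_{j}$, which follows from $|\Tr{MN}|\le\|M\|_{\infty}\tr N$ for $N\ge0$ together with $\|A\|_{\infty}\le\tr A$ for $A\ge0$. Applying this to each cycle, with the POVM elements $X_{k_{j}}\ge0$, gives $\bigl|\prod_{\tau}\Tr{\prod_{j\in\tau}X_{k_{j}}}\bigr|\le\prod_{i}\tr X_{k_{i}}$. Because the hypothesis \eqref{eq:ebc-condition} makes every weight $\Tr{\prod_{i}\sigma_{k_{i}}}$ nonnegative, the triangle inequality then yields $|\Q_{\A,r}(\alpha)|\le\sum_{\{k_{i}\}}\Tr{\prod_{i}\sigma_{k_{i}}}\prod_{i}\tr X_{k_{i}}=\Q_{\A,r}(\id)$, which is itself real and nonnegative. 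Hence $\Qmax=\Q_{\A,r}(\id)=\Tr{\A(\one)^{r}}$, establishing the value claimed in the lemma; this part combines cleanly with Theorem~\ref{thm:beta-reg}(a).

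For the uniqueness assertion I would argue strictness term by term: since $|\Q_{\A,r}(\alpha)|\le\sum_{\{k_{i}\}}\Tr{\prod_{i}\sigma_{k_{i}}}\prod_{i}\tr X_{k_{i}}$, it suffices, for each $\alpha\neq\id$, to exhibit a single index tuple $\{k_{i}\}$ with $\Tr{\prod_{i}\sigma_{k_{i}}}>0$ for which the cycle bound is strict. As $\alpha\neq\id$ there is a cycle $\tau$ of length $\ell\ge2$; taking all $k_{i}$ equal to a fixed $k_{0}$ reduces that cycle factor to $\tr(X_{k_{0}}^{\ell})<(\tr X_{k_{0}})^{\ell}$, and the weight $\tr(\sigma_{k_{0}}^{r})$ is strictly positive because $\sigma_{k_{0}}$ is a state. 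I expect the last step to be the main obstacle: the inequality $\tr(X_{k_{0}}^{\ell})<(\tr X_{k_{0}})^{\ell}$ is strict \emph{only} when some POVM element $X_{k_{0}}$ has rank at least two, so the term-by-term argument secures genuine uniqueness exactly in that case. In the fully degenerate rank-one situation (for instance a QC channel such as the complete dephasing channel, where $\sigma_{k}=\ket k\bra k$ and $X_{k}=\ket k\bra k$) one checks directly that nontrivial permutations can tie the identity, so strict uniqueness can fail. I would therefore handle the rank-one case separately, recovering the passage from $\liminf$ to $\lim$ not from uniqueness but from the entrywise positivity of such channels via Theorem~\ref{thm:beta-reg}(c), and state the uniqueness part of the lemma under the proviso that the POVM carries an element of rank at least two.
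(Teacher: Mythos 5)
Your proposal follows the same route as the paper's proof for the main bound: expand $\Q_{\A,\,r}(\alpha)$ over the POVM index, factor the $x$-sums over the cycles of $\alpha$, bound each cycle trace by $\prod_{i}\tr X_{k_{i}}$, and use nonnegativity of the weights $\Tr{\prod_{i}\sigma_{k_{i}}}$ to conclude $|\Q(\alpha)|\le\Q(\id)=\Tr{\A(\one)^{r}}$. (The paper proves the trace inequality by spectral decomposition of the $Y_{j}$ rather than via $|\tr(MN)|\le\|M\|_{\infty}\tr N$; that difference is cosmetic.)

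The substantive divergence is the uniqueness part, and here your caution is justified: you have in fact located a gap in the paper's own argument. The paper handles the equality cases by normalizing the POVM --- dropping zero elements and merging proportional ones --- and then asserts that equality forces $m=1$. But that normalization concerns \emph{distinct} POVM elements; it says nothing about the terms of the sum in which the same index $k_{0}$ is repeated around a cycle of length $\ell\ge2$, where the cycle factor is $\tr(X_{k_{0}}^{\ell})$. If $X_{k_{0}}$ is proportional to a rank-one projection this equals $(\tr X_{k_{0}})^{\ell}$, exactly as you observe, and no merging of distinct elements removes such terms. Your counterexample is genuine: for the complete dephasing channel ($\sigma_{k}=X_{k}=\ket k\bra k$) the weights $\Tr{\prod_{i}\sigma_{k_{i}}}$ vanish unless all $k_{i}$ coincide, and one computes $\Q(\alpha)=d$ for every $\alpha\in Sym(r)$, so $\id$ is a maximizer but not the unique one --- even though the paper explicitly intends the lemma to cover QC channels. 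Your repair is also sound: term-by-term strictness via the all-equal tuple (whose weight $\tr(\sigma_{k_{0}}^{r})$ is strictly positive) works precisely when some POVM element has rank at least two, and in the remaining rank-one cases the passage from $\liminf$ to $\lim$ in Theorem~\ref{thm:beta-reg-eb}(a) must be obtained by another route, e.g.\ Theorem~\ref{thm:beta-reg}(c) when the channel is entrywise positive (as the dephasing channel is). Note that the headline value is unaffected: $\Qmax=\Tr{\A(\one)^{r}}$ still holds, so Theorem~\ref{thm:beta-reg}(a) still yields the stated formula for $\beta_{r}^{\reg}$; only the uniqueness-based limit argument needs your proviso.
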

\begin{proof}
Entanglement breaking channels are of the form $\A(\rho)=\sum\sigma_{k}\Tr{X_{k}\rho}$
where the $\sigma_{k}$ are density matrices and the $X_{k}$ constitute
a POVM. Therefore we calculate
\begin{align}
|\Q(\alpha)| & =\bigg|\sum_{\substack{\{x_{i}\}\\
i=1\dots r
}
}\tr\prod_{i=1}^{r}\A_{x_{i}x_{\alpha(i)}}\bigg|\nonumber \\
 & =\bigg|\sum_{\substack{\{x_{i}\}\\
i=1\dots r
}
}\Tr{\prod_{i=1}^{r}\sum_{k}\sigma_{k}\Tr{X_{k}\ket{x_{\vphantom{()}i}}\bra{x_{\alpha(i)}}}}\bigg|\nonumber \\
 & =\bigg|\sum_{\substack{\{x_{i},k_{i}\}\\
i=1\dots r
}
}\Tr{\prod_{i=1}^{r}\sigma_{k_{i}}}\prod_{i=1}^{r}\Tr{X_{k_{i}}\ket{x_{\vphantom{()}i}}\bra{x_{\alpha(i)}}}\bigg|\nonumber \\
 & \leq\sum_{\substack{\{k_{i}\}\\
i=1\dots r
}
}\Tr{\prod_{i=1}^{r}\sigma_{k_{i}}}\bigg|\sum_{\substack{\{x_{i}\}\\
i=1\dots r
}
}\prod_{i=1}^{r}\bra{x_{\alpha(i)}}X_{k_{i}}\ket{x_{\vphantom{(}i}}\bigg|\label{eq:ebc-qal-bound}
\end{align}
where condition~\eqref{eq:ebc-condition} is used in the last step.
The term on the right side of the last line can be rewritten as a
product of traces
\begin{align*}
\sum_{\substack{\{x_{i}\}\\
i=1\dots r
}
}\prod_{i=1}^{r}\bra{x_{\alpha(i)}}X_{k_{i}}\ket{x_{\vphantom{(}i}} & =\sum_{\substack{\{x_{i}\}\\
i=1\dots r
}
}\dots\bra{x_{\alpha^{2}(1)}}X_{k_{\alpha(1)}}\ket{x_{\alpha(1)}}\bra{x_{\alpha(1)}}X_{k_{1}}\ket{x_{\vphantom{(}1}}\\
 & =\prod_{\gamma\in\alpha}\tr\prod_{i\in\gamma^{-1}}X_{k_{i}}
\end{align*}
where $\gamma\in\alpha$ are the sub-cycles of $\alpha$ and $\prod_{i\in\gamma^{-1}}$
is a product over the numbers in $\gamma^{-1}$.

Now consider any set of operators $Y_{j}\geq0$, using the spectral
decomposition $Y_{j}=\sum_{k}\lambda_{k,\, j}\ket k_{j}\bra k_{j}$
we have 
\begin{align}
\bigg|\tr\prod_{j=1}^{m}Y_{j}\bigg| & =\bigg|\sum_{\substack{\{k_{j}\}\\
j=1\dots m
}
}\lambda_{k_{1},\,1}\dots\lambda_{k_{m},\, m}\Tr{\ket{k_{1}}_{1}\bra{k_{1}}_{1}\dots\ket{k_{m}}_{m}\bra{k_{m}}_{m}}\bigg|\label{eq:first-line}\\
 & \leq\sum_{\substack{\{k_{j}\}\\
j=1\dots m
}
}\lambda_{k_{1},\,1}\dots\lambda_{k_{m},\, m}=\prod_{j=1}^{m}\tr Y_{j}\,.\nonumber 
\end{align}
Equality holds only in the following cases:
\begin{itemize}
\item If $m=1$.
\item If any of the $Y_{j}$'s equals zero.
\item If all the $Y_{j}$ are a multiple of a one-dimensional projection.
\end{itemize}
To see that there are no other possibilities consider the case where
$m\geq2$ and all $Y_{j}$ are rank one but they don't have the same
eigenvectors. Now the sum in \eqref{eq:first-line} contains the overlap
of the eigenvectors which is smaller than one in absolute value because
some eigenvectors are not the same. Therefore there is no equality.
Finally, consider the case where $m\geq2$, where all the $Y_{j}$
have at least rank one and where there exists a $j_{0}$ such that
$Y_{j_{0}}$ has rank two or higher. On the RHS of \eqref{eq:first-line},
choose the $k_{j}$ such that all $\lambda_{k_{j}}$ are non-zero.
For $k_{j_{0}}$ there are two or more choices and for one of those
choices the trace term has to be smaller than one in absolute value.
Therefore equality cannot hold in this case.

Returning to the $X_{j}$ we see that if any $X_{j}=0$ we can drop
it from our POVM without changing the channel $\A$. Also if say $X_{1}=qX_{2}$
are multiples of each other then we can combine them to $\tilde{X}_{1}=X_{1}+X_{2}$
and $\tilde{\sigma}_{1}=(\sigma_{1}+q\sigma_{2})/(1+q)$ again without
changing $\A$. Therefore we can assume no $X_{j}$ equals zero and
no two $X_{j}$ are multiples of each other, then equality is only
possible if $m=1$. It follows that
\begin{align*}
\bigg|\prod_{\gamma\in\alpha}\tr\prod_{i\in\gamma^{-1}}X_{k_{i}}\bigg| & \leq\prod_{i=1}^{r}\tr X_{k_{i}}\,,
\end{align*}
can only be equality if all cycles in $\alpha$ have length one, i.e.
$\alpha=\id$.

Combining the last inequality with \eqref{eq:ebc-qal-bound} we get

\[
|\Q(\alpha)|\leq\sum_{\substack{\{k_{i}\}\\
i=1\dots r
}
}\Tr{\prod_{i=1}^{r}\sigma_{k_{i}}}\prod_{i=1}^{r}\tr X_{k_{i}}=\Q(\id)
\]
with equality if and only if $\alpha=\id$. And finally,
\begin{align*}
\Q(\id) & =\sum_{\substack{\{k_{i}\}\\
i=1\dots r
}
}\Tr{\prod_{i=1}^{r}\sigma_{k_{i}}}\prod_{i=1}^{r}\tr X_{k_{i}}\\
 & =\sum_{\substack{\{k_{i}\}\\
i=1\dots r
}
}\Tr{\prod_{i=1}^{r}\sigma_{k_{i}}\tr X_{k_{i}}}\\
 & =\Tr{\prod_{i=1}^{r}\sum_{k}\sigma_{k}\tr X_{k}}\\
 & =\Tr{\A(\one)^{r}}\,.
\end{align*}

\end{proof}

\subsection{The qubit depolarizing channel}

\subsubsection{\label{sub:evaluating-q}Evaluating $\Q_{\Delta_{\lambda}(\alpha)}$}

In this section we calculate $\Q_{\Delta_{\lambda}}(\alpha)$ for
$\alpha=\id$ and $\alpha=(1\dots r)$. As we prove in Lemma~\ref{pro:maximal-q}
in \ref{sub:maximal-q} one of these two terms is always maximal,
$\Q_{\max}=\max\left\{ \Q_{\Delta_{\lambda}}(\id),\Q_{\Delta_{\lambda}}((1\dots r))\right\} $,
so they are of particular interest.

We have
\[
\Q_{\Delta_{\lambda}}(\id)=\tr\Delta_{\lambda}(\one)^{r}=\tr\one=2
\]

To evaluate the second $\Q$-term we consider a slightly more general
channel $\A$ with Choi-Jamiolkowski representation
\[
Choi(\Delta_{\lambda})=\begin{pmatrix}\mu & 0 & 0 & \lambda\\
0 & \nu & \kappa & 0\\
0 & \kappa & \nu & 0\\
\lambda & 0 & 0 & \mu
\end{pmatrix}\,.
\]
This matrix has diagonal block form with blocks
\[
\begin{pmatrix}\mu & \lambda\\
\lambda & \mu
\end{pmatrix},\,\begin{pmatrix}\nu & \kappa\\
\kappa & \nu
\end{pmatrix}\,.
\]
We need to raise the matrix to the $r$-th power, which gives
\[
\Mtwo{\mu}{\lambda}{\lambda}{\mu}^{r}=\frac{1}{2}\Mtwo 111{-1}\Mtwo{(\mu+\lambda)^{r}}00{(\mu-\lambda)^{r}}\Mtwo 111{-1}
\]
and similarly for the second matrix. Therefore we get
\begin{equation}
\Q_{\A}((1\dots r))=\Tr{Choi_{\Delta_{\lambda}}^{r}}=(\mu+\lambda)^{r}+(\mu-\lambda)^{r}+(\nu+\kappa)^{r}+(\nu-\kappa)^{r}\,.\label{eq:q-1tor-dualrail}
\end{equation}
For $\A=\Delta_{\lambda}$ we have $\mu=\frac{1+\lambda}{2}$, $\nu=\frac{1-\lambda}{2}$
and $\kappa=0$, so
\[
\Q_{\Delta_{\lambda}}((1\dots r))=\fracp{1+3\lambda}2^{r}+3\fracp{1-\lambda}2^{r}\,.
\]
In \ref{sec:q-dim-d} we compute the $\Q((1\dots r))$ for any dimension
$d\geq2$.

More generally, whenever $\alpha$ is a product of cycles of consecutive
numbers the sum factors as shown in Lemma~\ref{lem:q-factoring},
e.g. $\Q((123)(45))=\frac{1}{d}\Q((123))\Q((45))$.

\subsubsection{Regularized output entropy of $\Deltaln$\label{sub:average-output-entropy}}

\begin{figure}
\begin{centering}
\includegraphics[height=7cm]{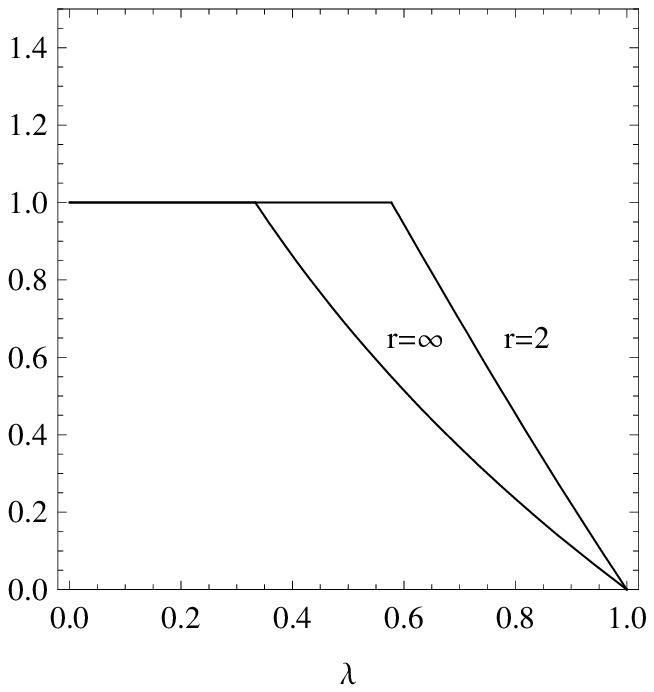}
\par\end{centering}

\caption{$S_{2}$ and $S_{\infty}$.}

\label{fig:entropy-plot} 
\end{figure}
\begin{table}
\begin{centering}
\begin{tabular}{|c||c|c|}
\hline 
$r$ & $c_{r}$ & $d_{r}$\tabularnewline
\hline 
\hline 
2 & .577 & .732\tabularnewline
\hline 
3 & .5 & .835\tabularnewline
\hline 
4 & .458 & .878\tabularnewline
\hline 
10 & .381 & .953\tabularnewline
\hline 
100 & .338 & .995\tabularnewline
\hline 
\end{tabular}
\par\end{centering}

\caption{Range parameters of Theorem~\ref{thm:main}.\label{tab:validity}}
\end{table}
We are now ready to prove the main theorem.

\theoremstyle{plain}\newtheorem*{repeat_thm_main}{Theorem~\ref{thm:main}}

\begin{repeat_thm_main}

\begin{itemize}\item[(a)]For all $r\in\mathbb{N}$, $r\ge2$, and
$\lambda\in[0,1]$, 
\begin{align}
\beta_{r}^{\reg}(\Delta_{\lambda}) & =\lim_{n\rightarrow\infty}\,\frac{1}{n}\,\beta_{r}(\Deltaln)=\min\left\{ 1,\frac{2r-\log\left[(1+3\lambda)^{r}+3(1-\lambda)^{r}\right]}{r-1}\right\} \label{eq:beta-reg-depol}
\end{align}
in particular
\begin{align*}
\beta_{2}^{\reg}(\Delta_{\lambda}) & =\begin{cases}
1 & \lambda\leq1/\sqrt{3}\\
2-\log(1+3\lambda^{2}) & \lambda>1/\sqrt{3}
\end{cases}\,,\\
\beta_{\infty}^{\reg}(\Delta_{\lambda}) & =\begin{cases}
1 & \lambda\leq1/3\\
2-\log(1+3\lambda) & \lambda>1/3
\end{cases}\,,
\end{align*}
where $\beta_{\infty}^{\reg}(\Delta_{\lambda})=\lim_{r\to\infty}\,\beta_{r}^{\reg}(\Delta_{\lambda})$.

\item[(b)]For all $r\in\mathbb{N}$, $r\ge2$, and $\lambda\in J_{r}$,
\begin{equation}
\overline{S}_{r}^{\reg}(\Delta_{\lambda})=\beta_{r}^{\reg}(\Delta_{\lambda})\label{eq:s-avg-equality}
\end{equation}
where $J_{r}=[0,c_{r}]\cup[d_{r},1]\subset[0,1]$ for some $0<c_{r}<d_{r}<1$
(see Table~\ref{tab:validity}).

\end{itemize}

\end{repeat_thm_main}

The regularized output entropy for $r=2$ and the lower bound of the
same for $r=\infty$ are plotted in Figure~\ref{fig:entropy-plot}.
We expect that \eqref{eq:s-avg-equality} holds for all $\lambda$.
The regularized output entropy is maximal when $2\geq\fracp{1+3\lambda}2^{r}+3\fracp{1-\lambda}2^{r}$
which holds for all $r$ when $\lambda\leq1/3$. It's interesting
to note that $\lambda\leq1/3$ is also the condition for $\Delta_{\lambda}$
to be entanglement breaking.
\begin{proof}
(a) From \ref{sec:q-dim-d} we know that $\Q(\id)=2$ and $\Q((1\dots r))=\fracp{1+3\lambda}2^{r}+3\fracp{1-\lambda}2^{r}$.
In Lemma~\ref{pro:maximal-q} in \ref{sub:maximal-q} we prove that
one of these two $\Q$-terms yields the maximal value, i.e. for any
$\lambda\in[0,1]$
\[
\Qmax=\max\left\{ 2,\fracp{1+3\lambda}2^{r}+3\fracp{1-\lambda}2^{r}\right\} \,.
\]
Additionally, $\Delta_{\lambda}$ is entrywise positive for $\lambda\in[0,1]$.
Therefore we can apply parts (a) and (c) of Theorem~\ref{thm:beta-reg}
and \eqref{eq:beta-reg-depol} follows immediately.

In the case $r=2$ we have
\begin{align*}
\frac{2\cdot2-\log\left[(1+3\lambda)^{2}+3(1-\lambda)^{2}\right]}{2-1} & =4-\log(4+12\lambda^{2})\\
 & =2-\log(1+3\lambda^{2})\,,
\end{align*}
and in the case $r=\infty$ we have
\begin{align*}
\lim_{r\to\infty}\,\beta_{r}^{\reg}(\Delta_{\lambda}) & =\lim_{r\to\infty}\,\frac{2r-\log\left[(1+3\lambda)^{r}+3(1-\lambda)^{r}\right]}{r-1}\\
 & =\lim_{r\to\infty}\,\frac{2r-r\,\log(1+3\lambda)}{r-1}\\
 & =2-\log(1+3\lambda)\,.
\end{align*}

(b) Define the functions
\begin{align*}
f(\ket{\phi}) & =\Tr{\Deltaln(\ket{\phi}\bra{\phi})^{r}}\\
g(x) & =\frac{1}{n(1-r)}\log x\,,
\end{align*}
and the two series (notice the $n$ dependency in the definition of
$f$)

\begin{align*}
a_{n} & =g(\e f)\,,\\
b_{n} & =\e g\circ f\,.
\end{align*}
The only difference between $a_{n}$ and $b_{n}$ is the position
at which the averaging over pure inputs $\ket{\phi}$ takes places.
As a result, $a_{n}=\frac{1}{n}\beta_{r}(\Deltaln)$ contains the
average moments that have been considered in previous sections, and
$b_{n}=\frac{1}{n}S_{r}(\Deltaln)$ is the Renyi output entropy per
systems. By Jensen's inequality and because the maximal output entropy
is $\log2=1$ we have the bounds $a_{n}\leq b_{n}\leq1$ for any $\lambda$.
Our goal is to prove that the two series have the same limit for $\lambda\in[0,c_{r}]\cup[d_{r},1]$.

From part (a) we know
\[
\lim_{n\to\infty}a_{n}=\frac{r-\log\Qmax}{r-1}\,.
\]
First, choose $c(r)$ such that $2\geq\fracp{1+3\lambda}2^{r}+3\fracp{1-\lambda}2^{r}$
for all $\lambda\in[0,c]$. Then we have $\Qmax=2$, and so 
\[
\lim_{n\to\infty}a_{n}=\lim_{n\to\infty}b_{n}=\Srreg(\Delta_{\lambda})=1\,.
\]
Now only \eqref{eq:s-avg-equality} remains to be proved for the range
$\lambda\in[d,1]$ with $d$ still to be determined.

In Proposition~\ref{pro:lipschitz-constant} in \ref{sub:lipschitz-constant}
we prove that the Lipschitz constant $\eta$ of the map $f$ is bounded
by $\eta\leq\sqrt{2}r\kappa^{n}$ for $\kappa=\lambda+\frac{1-\lambda}{\sqrt{2}}$.
From Levy's Lemma (according to Lemma III.1 in \cite{hayden_leung_winter_aspectsofentanglement})
we know that the values of $f$ concentrate around their average 
\begin{equation}
\Pr(|f(\ket{\phi})-\e f|>\alpha_{n})\leq4\exp\left(-C(k+1)\frac{\alpha_{n}^{2}}{\eta^{2}}\right)=:\epsilon_{n}\label{eq:pr-f-bound}
\end{equation}
with $k=2\cdot2^{n}-1$ the (real) dimension of the sphere of input
states, $C=(9\pi^{3}\ln2)^{-1}$, and we choose the deviation
\[
\alpha_{n}=\frac{1}{2}N\fracp{\Qmax}{2^{r}}^{n}\approx\frac{1}{2}\e f
\]
where $N\geq1$ is the multiplicity of the maximum of $\Q(\alpha)$,
that is, $\alpha_{n}$ is half of the dominant term of $\e f$ (see
prove of Theorem~\ref{thm:beta-reg}). Now $\alpha_{n}\to0$ (apart
from the special case $\lambda=1$) $ $and $\e f-\alpha_{n}>0$ which
is required in a later step. To ensure concentration for large $n$
the exponent $(k+1)\frac{\alpha_{n}^{2}}{\eta^{2}}$ needs to become
large. Taking the $2n$-th root we get
\begin{align*}
\left((k+1)\frac{\alpha_{n}^{2}}{\eta^{2}}\right)^{1/2n} & =\left(2\cdot\sqrt{2}^{2n}\frac{\frac{N}{4}\left(\frac{\Qmax}{2^{r}}\right)^{2n}}{2r^{2}\kappa^{2n}}\right)^{1/2n}\\
 & \approx\sqrt{2}\frac{\frac{\Qmax}{2^{r}}}{\kappa}
\end{align*}
where the approximation becomes equality in the large $n$ limit.
If this term is larger than 1 we have the required divergence, this
gives the following inequality 
\begin{align}
\frac{\Qmax}{2^{r}} & >\frac{\kappa}{\sqrt{2}}\,.\label{eq:range-of-validity}
\end{align}
This condition gives the lower bound $d_{r}$ for the $\lambda$ values
for which \eqref{eq:s-avg-equality} holds. Both sides of the inequality
are plotted in Figure~\ref{fig:validity}. For some values of $r$
the range parameters can be found in Table~\ref{tab:validity}.
\begin{figure}
\begin{centering}
\includegraphics[height=7cm]{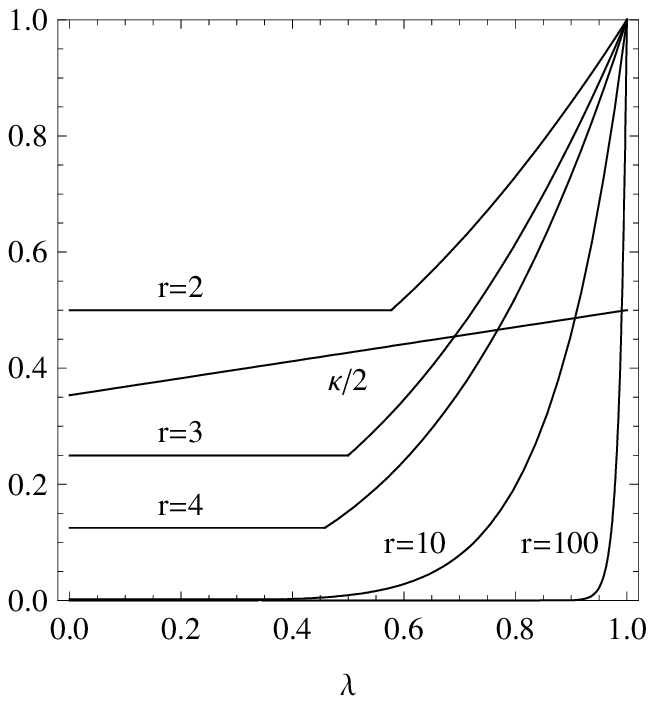}
\par\end{centering}

\caption{Plots of both sides of \eqref{eq:range-of-validity} for some $r$
values to determine the range of validity.\label{fig:validity}}

\end{figure}

To transform \eqref{eq:pr-f-bound} into a statement about the range
of $g$ -- and whence about $a_{n}$ -- we need a condition that implies
$|f(\ket{\phi})-\e f|>\alpha_{n}$. We set $\alpha_{n}'=\frac{1}{n(r-1)}\frac{\alpha_{n}}{\e f-\alpha_{n}}>|g(\e f)-g(\e f-\alpha_{n})|>|g(\e f)-g(\e f+\alpha_{n})|$
where the second inequality follows from convexity of $g$. Then for
any $x$ we have
\[
|g(\e f)-g(x)|>\alpha_{n}'\Rightarrow|\e f-x|>\alpha_{n}\,.
\]
Therefore, the values of $g\circ f$ concentrate around $a_{n}=g(\e f)$
\[
\Pr(|g\circ f(\ket{\phi})-a_{n}|>\alpha_{n}')\leq\epsilon_{n}\,.
\]
For $\alpha_{n}'$ we calculate
\begin{align*}
\lim_{n\to\infty}\alpha_{n}' & =\lim_{n\to\infty}\frac{1}{n(r-1)}\frac{\alpha_{n}}{\e f-\alpha_{n}}\\
 & =\lim_{n\to\infty}\frac{1}{n(r-1)}\frac{1/2}{1/2}\\
 & =0\,.
\end{align*}

To find an upper bound on $b_{n}$ assume all $\ket{\phi}\in(g\circ f)^{-1}[a_{n}-\alpha_{n}',a_{n}+\alpha_{n}']=B$
map to the maximal value $a_{n}+\alpha_{n}'$ and all $\ket{\phi}\in B^{c}$
map to the maximal value 1. This gives a possible range for the average
\begin{equation}
b_{n}\in[a_{n},(1-\epsilon_{n})(a_{n}+\alpha_{n}')+\epsilon_{n}\cdot1]\,.\label{eq:bn-range}
\end{equation}
For $\lambda\in[d,1]$ both $\epsilon_{n}$ and $\alpha'_{n}$ tend
to 0 for large $n$ and with \eqref{eq:bn-range} and because we know
the limit of $a_{n}$ we have
\[
\lim_{n\to\infty}a_{n}=\lim_{n\to\infty}b_{n}=\frac{r-\log\Qmax}{r-1}\,.
\]

\end{proof}

\subsubsection{Output entropy of random sequences}

Considering sequences of random pure input states $\ket{\phi_{n}}$
with increasing dimension $2^{n}$ we have the following statement.
\begin{prop}
Let $\ket{\phi_{n}}\in\C^{2^{n}}$ be a sequence of random pure states
and
\[
c_{n}=\frac{1}{n}S_{r}(\Deltaln(\ket{\phi_{n}}\bra{\phi_{n}}))
\]
the sequence of output Renyi entropies per system. Then
\[
c_{n}\stackrel{\mathrm{a.s.}}{\to}\frac{r-\log\Qmax}{r-1}
\]
 if $\lambda$ is restricted as in Theorem~\ref{thm:main}.\end{prop}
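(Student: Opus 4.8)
The plan is to upgrade the convergence-in-mean already established in the proof of Theorem~\ref{thm:main}(b) to almost-sure convergence by means of the first Borel--Cantelli lemma. Observe first that the random variable in question is exactly the one studied there: with $f(\ket\phi)=\Tr{\Deltaln(\ket\phi\bra\phi)^r}$ and $g(x)=\frac{1}{n(1-r)}\log x$ we have $c_n=g\circ f(\ket{\phi_n})$, whose mean is $b_n=\frac1n\overline S_r(\Deltaln)$ and whose target limit is $\lim_n a_n=\frac{r-\log\Qmax}{r-1}$. Since Borel--Cantelli requires no independence, it suffices to show that for every $\delta>0$ the deviation probabilities $\Pr(|c_n-a_n|>\delta)$ (or a suitable one-sided version) are summable in $n$; the numerical convergence $a_n\to\frac{r-\log\Qmax}{r-1}$ then transports to $c_n$. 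I would treat the two pieces of $J_r=[0,c_r]\cup[d_r,1]$ separately, since the right tool differs on each.

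On the upper range $\lambda\in[d_r,1]$ I would simply reuse the Levy concentration bound \eqref{eq:pr-f-bound} from the proof of Theorem~\ref{thm:main}(b), which already gives $\Pr(|c_n-a_n|>\alpha_n')\le\epsilon_n$ with $\alpha_n'\to0$. The only additional input needed is summability of $\epsilon_n$, and this is automatic here: the defining inequality \eqref{eq:range-of-validity} is exactly the statement that $((k+1)\alpha_n^2/\eta^2)^{1/2n}\to B:=\sqrt2\,(\Qmax/2^r)/\kappa>1$, so the exponent $(k+1)\alpha_n^2/\eta^2$ grows like $B^{2n}$ and $\epsilon_n=4\exp(-C(k+1)\alpha_n^2/\eta^2)$ decays doubly exponentially. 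Hence $\sum_n\epsilon_n<\infty$, Borel--Cantelli gives $|c_n-a_n|\le\alpha_n'$ eventually almost surely, and letting $n\to\infty$ yields $c_n\to\frac{r-\log\Qmax}{r-1}$ almost surely.

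On the lower range $\lambda\in[0,c_r]$ the target value is $1$ (here $\Qmax=2$ and $\log2=1$), and the Levy estimate is too weak for $r\ge2$, so I would exploit one-sidedness instead. Because the output entropy per system never exceeds $\log2=1$ we have $c_n\le1$ deterministically, so it is enough to bound $\Pr(c_n\le1-\delta)$. Since $g$ is decreasing, this event is $\{f\ge 2^{n(1-r)(1-\delta)}\}$, an upper-tail event for the nonnegative variable $f$ whose mean is controlled by \eqref{eq:a-n-average-moments-evaluated}: $\e f=C_{2^n,r}\sum_\alpha\Q(\alpha)^n\le r!\,2^{-nr}\,\Qmax^n=r!\,2^{n(1-r)}$. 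A single application of Markov's inequality then gives the geometric bound $\Pr(c_n\le1-\delta)\le \e f\,2^{-n(1-r)(1-\delta)}\le r!\,2^{-n(r-1)\delta}$, which is summable for every $\delta>0$. By Borel--Cantelli, $c_n>1-\delta$ eventually almost surely; intersecting over $\delta=1/m$ and using $c_n\le1$ gives $c_n\to1$ almost surely.

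The routine parts are the two tail estimates; the one genuine decision is recognizing that the two halves of $J_r$ call for different arguments. The main obstacle is the lower range: there the Gaussian concentration radius $\alpha_n\sim2^{n(1-r)}$ is exponentially smaller than the Lipschitz scale $\kappa^n$, so the two-sided Levy bound does not become summable, and one must instead use that the limiting entropy equals the maximal value $1$, which makes a crude one-sided Markov bound on $f$ suffice.
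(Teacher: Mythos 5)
Your proposal is correct, and on the lower half of $J_{r}$ it takes a genuinely different route from the paper. The paper's own proof is a single uniform argument for all $\lambda\in J_{r}$: it applies the Gaussian (Levy) bound to the entropy function $f_{n}(\ket{\phi})=\frac{1}{n}S_{r}(\Deltaln(\ket{\phi}\bra{\phi}))$ with a \emph{fixed} deviation $\epsilon-\delta_{n}\geq\epsilon/2$, where $\delta_{n}=|\e f_{n}-c|\to0$ is the only place the restriction $\lambda\in J_{r}$ enters (via Theorem~\ref{thm:main}(b)), and with the Lipschitz bound $\eta\leq\sqrt{2}r\kappa^{n}$; the resulting tails $4\exp(-\tilde{C}(2/\kappa^{2})^{n})$ are summable for every such $\lambda$, and Borel--Cantelli finishes. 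Your upper-range argument is essentially this same machinery (Levy concentration plus Borel--Cantelli, with summability coming from \eqref{eq:range-of-validity}), so there the two proofs coincide. Your lower-range Markov argument has no counterpart in the paper, and it buys something real: your diagnosis that the trace-moment concentration bound \eqref{eq:pr-f-bound} does not decay on $[0,c_{r}]$ is accurate (there $\Qmax=2$, so $\sqrt{2}(\Qmax/2^{r})/\kappa\leq2^{2-r}\leq1$ for $r\geq2$), and the paper's uniform treatment sidesteps this only by inserting the Lipschitz constant of the trace moment $f$ (Proposition~\ref{pro:lipschitz-constant}) into a concentration statement about the entropy $f_{n}=g\circ f$, a substitution it does not justify --- since $g$ is logarithmic, a fixed deviation of the entropy corresponds to a deviation of $f$ of order $\e f$, which reinstates exactly the condition you identified. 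Your one-sided bound --- $c_{n}\leq1$ deterministically, and $\Pr(c_{n}\leq1-\delta)=\Pr\left(f\geq2^{n(1-r)(1-\delta)}\right)\leq r!\,2^{-n(r-1)\delta}$ by Markov's inequality together with the moment formula \eqref{eq:a-n-average-moments-evaluated} --- is elementary, airtight, and closes that gap; its only limitation is that it works precisely because the limiting value on $[0,c_{r}]$ is the maximal entropy $1$, so it cannot substitute for the concentration argument on $[d_{r},1]$.
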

\begin{proof}
Define $f_{n}(\ket{\phi})=\frac{1}{n}S_{r}(\Deltaln(\ket{\phi}\bra{\phi}))$,
the limit value $c=\frac{r-\log\Qmax}{r-1}=\lim_{n\to\infty}\e f_{n}$,
and $\delta_{n}=|\e f_{n}-c|$. Then for any $\epsilon>0$
\begin{align}
\Pr(|f_{n}(\ket{\phi})-c|>\epsilon) & \leq\Pr(|f_{n}(\ket{\phi})-\e f_{n}|>\epsilon-\delta_{n})\label{eq:almost-prob}\\
 & \leq4\exp\left(-C(k+1)\frac{(\epsilon-\delta_{n})^{2}}{\eta^{2}}\right)\,,\nonumber 
\end{align}
with $C$, $k=2\cdot2^{n}-1$, and $\eta\leq\sqrt{2}r\kappa^{n}$
with $0<\kappa<1$ as in the proof of Theorem~\ref{thm:main}. If
$\lambda\in J_{r}$ then $\lim\delta_{n}=0$. When $n$ becomes large
then $k$ becomes large, $\eta$ becomes small, and $\epsilon-\delta_{n}$
is close to $\epsilon>0$. Therefore the probabilities in \eqref{eq:almost-prob}
become small. Let $N$ be such that for $n>N$ we have $\delta_{n}<\epsilon/2$.
Then we have the bound
\begin{align*}
\sum_{n=1}^{\infty}\Pr(|f_{n}(\ket{\phi_{n}})-c|>\epsilon) & \leq\sum_{n=1}^{\infty}4\exp\left(-C(k+1)\frac{(\epsilon-\delta_{n})^{2}}{\eta^{2}}\right)\\
 & \leq N+\sum_{n>N}4\exp\left(-C(k+1)\frac{(\epsilon/2)^{2}}{\eta^{2}}\right)\\
 & \leq N+\sum_{n>N}4\exp\left(-\tilde{C}(2/\kappa^{2})^{n}\right)\\
 & <\infty\,,
\end{align*}
with $\tilde{C}=C\cdot2\cdot\frac{(\epsilon/2)^{2}}{(\sqrt{2}r)^{2}}$
independent of $n$. The Proposition follows by the lemma of Borel-Cantelli.
\end{proof}

\section{Proof of lemmas}

\subsection{Maximal $\Q$ for $\Delta_{\lambda}$\label{sub:maximal-q}}

Before proving Proposition~\ref{pro:maximal-q} which is required
for the proof of Theorem~\ref{thm:main} we introduce some new notation
and we present four lemmas. We only prove our Lemmas in two dimensions,
but similar results will hold for higher dimensions.

For the most part we consider a channel $\A$ slightly more general
than the depolarization channel with
\[
\begin{array}{r@{=}lr@{=}l}
\A_{00} & \Mtwo{\mu}00{\nu} & \A_{11} & \Mtwo{\nu}00{\mu}\vspace{1em}\\
\A_{10} & \Mtwo 0{\kappa}{\lambda}0 & \A_{01} & \Mtwo 0{\lambda}{\kappa}0
\end{array}\,,
\]
where $\kappa,\lambda,\mu,\nu\in\R_{0}^{+}$ and $\mu\geq\nu$, $\lambda\geq\kappa$.
We call this channel the \emph{two-rail channel}. Remember that in
the case $\A=\Delta_{\lambda}$ we have $\mu=\frac{1+\lambda}{2}$,
$\nu=\frac{1-\lambda}{2}$ and $\kappa=0$. Because all these matrix
entries are positive, so are the $\Q$-terms, and therefore the largest
positive $\Q$-term will yield $\Qmax$.

We think of these matrices as the diagrams in Figure~\ref{fig:matrix-diagrams}.
We refer to the lines as rails and to their vertical position as their
track (starting with track~0) 
\begin{figure}
\begin{centering}
\includegraphics[width=3cm]{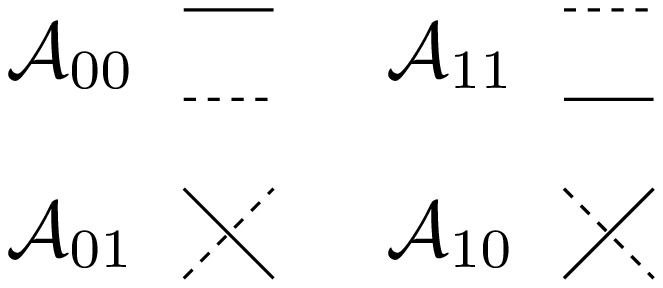}
\par\end{centering}

\caption{The diagrams for the $\A_{xy}$ matrices.}

\label{fig:matrix-diagrams} 
\[
\A_{00}\A_{01}\A_{10}\A_{11}\A_{10}
\]
\vspace{.3em}

\begin{centering}
\includegraphics[width=2.2cm]{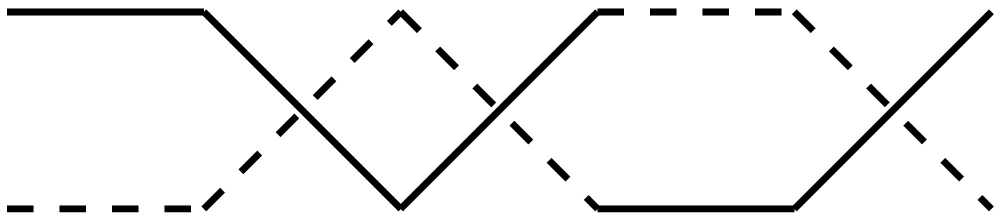}
\par\end{centering}

\caption{A product of $\A_{xy}$ matrices and the corresponding diagram.}

\label{fig:product-diagram} 
\end{figure}

A product looks like the diagram in Figure~\ref{fig:product-diagram},
notice that we read from right to left, the same way that matrix multiplication
applies. Consider a vector multiplying with this product, the diagram
can be thought of as presenting two rails along which the two entries
of the vector pass through to the left. On the way, continuous lines
multiply the entries with factors $\mu$ or $\lambda$, dashed lines
with a factor $\nu$ or $\kappa$. Because the products are inside
a trace, they will only contribute, if the rail starting at the top
on the right, ends on the top at the left (giving the $\bra 0\dots\ket 0$
contribution), and the same for the rail starting at the bottom (giving
the $\bra 1\dots\ket 1$ contribution). Using this, we can compare
contributions to $\Q(\alpha)$ for different $\alpha$.

We rewrite 
\[
\Q(\alpha)=\sum_{\delta\in\D(\alpha)}\Tr{\delta}
\]
 where $\D(\alpha)$ is the set of $2^{r}$ diagrams corresponding
to $\alpha$ and we identify the diagram $\delta$ with the corresponding
matrix.

Furthermore, for a diagram $\delta$ we define $\delta^{1}$ to be
the unchanged diagram and $\delta^{-1}$ to be the horizontally reflected
diagram. If we take $\D_{0}(\alpha)$ to be the set of all diagrams
in $\D(\alpha)$ that start high (at track~0) then obviously $\D(\alpha)=\D_{0}(\alpha)\cup\D_{0}^{-1}(\alpha)$
provides a convenient splitting of the sum in $\Q(\alpha)$.

We say an $\alpha$ is \emph{non-overlapping} if all its cycles permute
consecutive numbers, e.g. $\alpha=(123)(45)(678)$. For such $\alpha$
we write $\alpha=\alpha_{1}\dots\alpha_{s}$, where all the $\alpha_{i}$
are the cycles. We define a product for diagrams by simply concatenating
them. With this product we get $\D(\alpha)=\D(\alpha_{1})\dots\D(\alpha_{s})$.
\begin{lem}
\label{lem:prop-unity}Consider a diagram $\delta\in\D((1\dots r))$
of a two-rail channel. Then we have 
\[
\delta+\delta^{-1}\propto\one\,.
\]
\end{lem}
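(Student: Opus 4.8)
The plan is to read the lemma off from two structural facts about the matrix $\delta$ attached to a diagram in $\D((1\dots r))$. Such a diagram is a choice of track labels $(x_1,\dots,x_r)\in\{0,1\}^r$, and since the full cycle acts by $\alpha(i)=i+1$ (cyclically, with $x_{r+1}\equiv x_1$) the associated matrix is the cyclic product $\delta=\A_{x_1x_2}\A_{x_2x_3}\cdots\A_{x_rx_1}$. The first thing I would record is that each factor $\A_{x_ix_{i+1}}$ is \emph{diagonal} when $x_i=x_{i+1}$ (it equals $\A_{00}$ or $\A_{11}$) and \emph{anti-diagonal} when $x_i\neq x_{i+1}$ (it equals $\A_{01}$ or $\A_{10}$).

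The first key step is to prove that $\delta$ is diagonal. The number of anti-diagonal factors equals the number of indices $i$ with $x_i\neq x_{i+1}$, i.e. the number of value changes in the cyclic word $x_1x_2\cdots x_rx_1$. Because this word closes up on itself, it must change value an even number of times around the cycle. A product of $2\times2$ matrices that are each diagonal or anti-diagonal is diagonal precisely when the number of anti-diagonal factors is even, so $\delta=\diag(\delta_{00},\delta_{11})$ with $\delta_{00},\delta_{11}\ge0$.

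The second step is to identify the reflected diagram $\delta^{-1}$ algebraically. Horizontal reflection interchanges the two tracks — it sends a diagram that starts high to one that starts low, consistent with the splitting $\D(\alpha)=\D_0(\alpha)\cup\D_0^{-1}(\alpha)$ — which amounts to complementing every label $x_i\mapsto 1-x_i$. Conjugation by the swap matrix $S=\Mtwo{0}{1}{1}{0}$ realizes this on each gate: a direct check gives $S\A_{xy}S=\A_{\bar x\bar y}$ for all $x,y$ (for instance $S\A_{00}S=\A_{11}$ and $S\A_{01}S=\A_{10}$), where $\bar\cdot$ denotes the bit flip. Since $S^2=\one$, the internal factors telescope and the reflected diagram has matrix $\delta^{-1}=\A_{\bar x_1\bar x_2}\cdots\A_{\bar x_r\bar x_1}=S\delta S$.

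Combining the two steps finishes the argument: since $S\,\diag(\delta_{00},\delta_{11})\,S=\diag(\delta_{11},\delta_{00})$, we get $\delta+\delta^{-1}=\delta+S\delta S=(\delta_{00}+\delta_{11})\,\one$, which is proportional to $\one$. I expect the only genuine obstacle to be pinning down the precise meaning of ``horizontally reflected'' and confirming that it coincides with the track swap $S(\cdot)S$ (equivalently, complementing the configuration) rather than the order-reversing transpose $\delta\mapsto\delta^{T}$; the latter interpretation would fail, since $\delta^{T}=\delta$ for diagonal $\delta$ and $\delta+\delta^{T}=2\delta$ need not be a multiple of $\one$. Once the reflection is correctly identified, the even-change/diagonality observation and the gate identity $S\A_{xy}S=\A_{\bar x\bar y}$ are routine verifications, and they are exactly the two ingredients that force the diagonal entries to be exchanged and hence the sum to be a scalar.
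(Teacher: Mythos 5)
Your proof is correct and takes essentially the same approach as the paper's: the paper also first observes that in a full-cycle diagram both rails close up (your even-number-of-crossings argument), so $\delta$ is diagonal, and then notes that horizontal reflection exchanges the two diagonal entries (your conjugation by the swap matrix $S$). Your version merely renders the paper's diagrammatic statements in explicit matrix algebra, and you correctly resolved the one ambiguity — reflection means swapping tracks, not transposing the product.
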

\begin{proof}
In any diagram $\delta\in\D((1\dots r))$ one rail starts and ends
at track~0 and the other starts and ends at track~1. That means
the corresponding matrix is diagonal. Reflecting the diagram simply
means exchanging the two diagonal entries, and if we sum $\delta+\delta^{-1}$
then the diagonal entries both have the same sum, i.e. it is proportional
to unity.\end{proof}
\begin{lem}
\label{lem:q-factoring}Let $\alpha=\alpha_{1}\dots\alpha_{s}\in Sym(r)$
be a non-overlapping permutation consisting of $s$ cycles. Then the
$\Q(\alpha)$ of a two-rail channel factors like

\[
\Q(\alpha)=\frac{1}{2^{s-1}}\Q(\alpha_{1})\dots\Q(\alpha_{s})\,.
\]
\end{lem}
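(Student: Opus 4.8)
The plan is to reduce the computation of $\Q(\alpha)$ for a non-overlapping permutation to a product over its individual cycles, exploiting the fact established in Lemma~\ref{lem:prop-unity} that summing a full-cycle diagram with its horizontal reflection yields a multiple of $\one$. The starting point is the concatenation identity $\D(\alpha)=\D(\alpha_1)\cdots\D(\alpha_s)$ already recorded in the excerpt: since concatenating diagrams is the same operation as multiplying the associated matrices, every diagram $\delta\in\D(\alpha)$ factors as a matrix product $\delta=\delta_1\cdots\delta_s$ with $\delta_i\in\D(\alpha_i)$, and the single global trace in $\Q(\alpha)=\sum_{\delta\in\D(\alpha)}\Tr{\delta}$ closes the whole product into one loop.

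First I would use linearity of the trace together with distributivity of matrix multiplication to interchange the sum and the product. Writing $S_i=\sum_{\delta_i\in\D(\alpha_i)}\delta_i$ for the matrix obtained by summing all diagrams of the $i$-th cycle, the mutual independence of the summation indices belonging to distinct (consecutive) blocks gives
\[
\Q(\alpha)=\sum_{\delta_1,\dots,\delta_s}\Tr{\delta_1\cdots\delta_s}=\Tr{S_1\cdots S_s}\,.
\]
Next I would show that each $S_i$ is a scalar multiple of the identity. Because $\alpha_i$ is a single cycle, the splitting $\D(\alpha_i)=\D_0(\alpha_i)\cup\D_0^{-1}(\alpha_i)$ pairs each diagram that starts high with its horizontal reflection, so
\[
S_i=\sum_{\delta\in\D_0(\alpha_i)}\left(\delta+\delta^{-1}\right)\,,
\]
and each summand is proportional to $\one$ by Lemma~\ref{lem:prop-unity}; hence $S_i=C_i\one$ for some scalar $C_i$. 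Taking the trace of this relation and using $\Tr{\one}=2$ identifies the scalar as $C_i=\tfrac12\Q(\alpha_i)$, since $\Tr{S_i}=\Q(\alpha_i)$ directly from the definition.

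Finally I would combine these facts: because each $S_i$ is a scalar matrix it is central, so $\Tr{S_1\cdots S_s}=\left(\prod_i C_i\right)\Tr{\one}=2\prod_i C_i$, and substituting $C_i=\tfrac12\Q(\alpha_i)$ yields $\Q(\alpha)=2\prod_{i=1}^s\tfrac12\Q(\alpha_i)=2^{1-s}\prod_{i=1}^s\Q(\alpha_i)$, as claimed. I expect the step requiring the most care to be the interchange producing $\Q(\alpha)=\Tr{S_1\cdots S_s}$: one must verify that for a non-overlapping $\alpha$ the matrix product $\prod_i\A_{x_i x_{\alpha(i)}}$ genuinely splits into consecutive blocks with mutually independent summation indices, so that the single global trace factors through the central matrices $S_i$ without introducing spurious connections between blocks. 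This is precisely what the non-overlapping hypothesis, encoded in $\D(\alpha)=\D(\alpha_1)\cdots\D(\alpha_s)$, guarantees. Once this reduction is justified, the collapse $S_i=C_i\one$ from Lemma~\ref{lem:prop-unity} and the trace evaluation finish the proof immediately.
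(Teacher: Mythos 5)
Your proposal is correct and follows essentially the same route as the paper's proof: the same splitting of the diagram sum across cycles, the same appeal to Lemma~\ref{lem:prop-unity} to collapse each per-cycle sum $S_i$ to a multiple of $\one$, and the same trace bookkeeping producing the factor $2^{1-s}$. Your version merely makes explicit the identification $S_i=\tfrac12\Q(\alpha_i)\one$, which the paper leaves implicit when it ``splits the single trace into $s$ separate traces.''
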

\begin{proof}
First we use the fact, that the diagrams for a non-overlapping $\alpha$
can be split between cycles, i.e. the sum splits into sums over separate
diagrams,
\begin{align*}
\Q(\alpha) & =\sum_{\delta\in\D(\alpha)}\Tr{\delta}\\
 & =\sum_{\delta_{1}\in\D(\alpha_{1})}\dots\sum_{\delta_{s}\in\D(\alpha_{s})}\Tr{\delta_{1}\dots\delta_{s}}\\
 & =\Tr{\sum_{\delta_{1}\in\D(\alpha_{1})}\delta_{1}\dots\sum_{\delta_{s}\in\D(\alpha_{s})}\delta_{s}}\,.
\end{align*}
 Using the splitting $\D(\alpha)=\D_{0}(\alpha)\cup\D_{0}^{-1}(\alpha)$
and Lemma~\ref{lem:prop-unity} we have 
\[
\sum_{\delta\in\D(\alpha_{i})}\delta=\sum_{\delta\in\D_{0}(\alpha_{i})}\delta+\delta^{-1}\propto\one
\]
for all $s$ cycles $\alpha_{i}$. Therefore, we can split the single
trace into $s$ separate traces 
\begin{align*}
\Q(\alpha) & =\frac{1}{2^{s-1}}\sum_{\delta_{1}\in\D(\alpha_{1})}\Tr{\delta_{1}}\dots\sum_{\delta_{1}\in\D(\alpha_{s})}\Tr{\delta_{s}}\\
 & =\frac{1}{2^{s-1}}\Q(\alpha_{1})\dots\Q(\alpha_{s})\,.
\end{align*}
\end{proof}
\begin{lem}
For $\A=\Delta_{\lambda}$ the depolarizing channel in two dimensions,
$\Q$ restricted to non-overlapping permutations $\alpha\in Sym(r)$
is either maximal when $\alpha=(1\dots r)$ or when $\alpha=\id$.\label{lem:non-overlap-max}\end{lem}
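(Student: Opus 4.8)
The plan is to use the cycle factorization of $\Q$ to collapse the problem into a one–dimensional optimization over cycle lengths. If $\alpha$ is non-overlapping with cycle lengths $\ell_1,\dots,\ell_s$ (so $\sum_i\ell_i=r$), then Lemma~\ref{lem:q-factoring} gives $\Q(\alpha)=2^{-(s-1)}\prod_{i=1}^s\Q(\alpha_i)$, where each factor is the value of a single cycle. A single cycle of length $\ell$ contributes
\[
\Q(\alpha_i)=q_\ell:=\fracp{1+3\lambda}2^{\ell}+3\fracp{1-\lambda}2^{\ell},
\]
independently of the internal cyclic order (this is the evaluation of \ref{sub:evaluating-q}, with a length-one cycle giving $q_1=\tr\Delta_\lambda(\one)=2$). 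Writing $a=\tfrac{1+3\lambda}2$, $b=\tfrac{1-\lambda}2$ and $p_\ell=q_\ell/2$, so that $p_1=1$, I would rewrite
\[
\Q(\alpha)=2^{-(s-1)}\prod_{i=1}^s 2p_{\ell_i}=2\prod_{i=1}^s p_{\ell_i}.
\]
Thus maximizing $\Q$ over non-overlapping permutations is exactly maximizing $\prod_i p_{\ell_i}$ over all decompositions $r=\sum_i\ell_i$ with $\ell_i\ge1$; the candidates $\alpha=\id$ (all $\ell_i=1$, product $1$, so $\Q(\id)=2$) and $\alpha=(1\dots r)$ (one block, so $\Q=q_r$) are the two extreme decompositions.

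The key structural fact is that $p_\ell$ is \emph{log-convex} in $\ell$. I would verify the one-line identity
\[
p_{\ell-1}p_{\ell+1}-p_\ell^2=\tfrac14\bigl[(a^{\ell-1}+3b^{\ell-1})(a^{\ell+1}+3b^{\ell+1})-(a^{\ell}+3b^{\ell})^2\bigr]=\tfrac34\,(ab)^{\ell-1}(a-b)^2\ge0,
\]
using $a,b\ge0$ for $\lambda\in[0,1]$. Hence $h(\ell):=\log p_\ell$ is a convex function of the integer $\ell$, with $h(1)=0$.

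Given log-convexity, maximizing $\sum_i h(\ell_i)$ subject to $\sum_i\ell_i=r$, $\ell_i\ge1$ is routine discrete convexity carried out in two stages. First, for a fixed number of parts $s$, I would repeatedly apply the spreading move that replaces a pair of parts $m\le n$ with $m\ge2$ by $(m-1,n+1)$: convexity of $h$ yields $h(m-1)+h(n+1)\ge h(m)+h(n)$, so the sum never decreases, and the move terminates (the sum of squares of the parts strictly increases and is bounded by $r^2$) at a configuration in which at most one part exceeds $1$, namely $(r-s+1,1,\dots,1)$ with value $h(r-s+1)$. Second, maximizing over $s\in\{1,\dots,r\}$ is the same as maximizing $h(k)$ over $k=r-s+1\in\{1,\dots,r\}$, and convexity of $h$ forces this maximum to an endpoint, $\max\{h(1),h(r)\}=\max\{0,h(r)\}$. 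Exponentiating gives $\max\prod_i p_{\ell_i}=\max\{1,p_r\}$, so $\max\Q=\max\{2,q_r\}=\max\{\Q(\id),\Q((1\dots r))\}$, which is the claim.

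The only genuine obstacle is the log-convexity identity, and here it is elementary precisely because $\kappa=0$ for $\Delta_\lambda$ makes $q_\ell=a^\ell+3b^\ell$ a two-term exponential sum, reducing positivity of the difference to $(a-b)^2\ge0$; for a general two-rail channel the corresponding four-term expression would require more care. The remaining optimization is standard, the one point needing attention being that one must optimize over the number of parts as well as their sizes, which is why convexity of $h$ is invoked twice — once to push all weight onto a single block, and once to select between the endpoints $k=1$ and $k=r$.
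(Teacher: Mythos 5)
Your proposal is correct and follows essentially the same route as the paper: both rest on the factorization $\Q(\alpha)=2^{-(s-1)}\prod_{i}\Q(\alpha_i)$ from Lemma~\ref{lem:q-factoring}, then a two-stage convexity argument that, for a fixed number of cycles $s$, pushes all weight into a single cycle of length $t=r-s+1$ and finally compares the endpoints $t=1$ and $t=r$. The only differences are cosmetic: you package the convexity as discrete log-convexity of the single-cycle value $p_\ell$ via the identity $p_{\ell-1}p_{\ell+1}-p_\ell^2=\tfrac34(ab)^{\ell-1}(a-b)^2\ge0$, where the paper instead uses convexity in $x$ of the pair product $f(x)f(k-x)$, and your sum-of-squares termination potential makes the iteration slightly more explicit than the paper's ``repeat $s-1$ times.''
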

\begin{proof}
Remember from \ref{sub:evaluating-q} that $\Q((1\dots r))=(\mu+\lambda)^{r}+3\nu^{r}$.
It is convenient to introduce
\[
f(x)=(\mu+\lambda)^{x}+3\nu^{x}\,,
\]
as a function with range $\R$. For a pure $\alpha=\alpha_{1}\dots\alpha_{s}$
according to Lemma~\ref{lem:q-factoring} we get 
\begin{equation}
\Q(\alpha)=\frac{1}{2^{s-1}}\prod_{i=1}^{s}f(|\alpha_{i}|)\,.\label{eq:q-for-pure-alpha}
\end{equation}
with $|\alpha_{i}|$ denoting the length of a cycle.

In the following we keep the number $s$ of cycles and the total length
$r$ of the permutation invariant. Now, if we increase the length
of one cycle and decrease the length of another, we are only changing
two factors in the product, $f(x)f(k-x)$, where $x$ is the length
of the first cycle and $k$ the (invariant) sum of the lengths of
both cycles. We rewrite
\begin{align*}
f(x)f(k-x) & =((\mu+\lambda)^{x}+3\nu^{x})((\mu+\lambda)^{k-x}+3\nu^{k-x})\\
 & =(\mu+\lambda)^{k}+9\nu^{k}+3\nu^{k}\fracp{\mu+\lambda}{\nu}^{x}+3(\mu+\lambda)^{k}\fracp{\nu}{\mu+\lambda}^{x}\,.
\end{align*}
Because of $\mu+\lambda,\nu\geq0$ the function $f(x)f(k-x)$ is convex
in $x$. Therefore it is maximal at the boundaries, i.e. when one
cycle has the minimal length of 1. If we repeat this procedure $s-1$
times we end up with one large cycle of length $t=r-s+1$ while all
other cycles are of length 1. In every step we increase $\Q$, so
we get the bound
\begin{equation}
\Q(\alpha)\leq\Q((1)(2)\dots(s-1)(s\dots r))=\frac{1}{2^{s-1}}f(1)^{s-1}f(t)=f(t)\label{eq:Q-bound}
\end{equation}
for non-overlapping permutations $\alpha$ consisting of $s$ cycles.

Now compare the upper bounds given by \eqref{eq:Q-bound} for permutations
of the same total length $r$ but different number of cycles $s$.
This is the same as varying $t$ . Because $f$ is convex we get a
maximal upper bound if $t$ is minimal or maximal. The minimal value
$t=1$ is achieved when $s=r$ and all the cycles are of length 1.
Then \eqref{eq:Q-bound} becomes an equality - there are no steps
necessary in the maximization procedure - and we have $\Q(\id)=f(1)=2$.
The maximal value $t=r$ is achieved when $\alpha$ is simply one
large cycle. Again \eqref{eq:Q-bound} becomes equality, and $\Q(\alpha)=f(r)$.
One of these upper bounds is the highest upper bound possible in \eqref{eq:Q-bound},
and because they are achieved by $\Q(\id)$ and $\Q((1\dots r))$
we know that one of these $\Q(\alpha)$ is maximal over $\alpha\in Sym(r)$.\end{proof}
\begin{lem}
\label{lem:q-conjugacy-class}Let $\A$ be a two-rail channel, and
$[\beta]$ the conjugacy class of a permutation. Then $\Q$ restricted
to $[\beta]$ is maximal on non-overlapping members $\alpha\in[\beta]$.\end{lem}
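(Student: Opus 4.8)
The plan is to convert the statement into a single upper bound. By Lemma~\ref{lem:q-factoring} every non-overlapping member of $[\beta]$ has the same value of $\Q$: if $\beta$ has cycle type $(\ell_1,\dots,\ell_s)$, then any non-overlapping $\tilde\alpha\in[\beta]$ factors as $\Q(\tilde\alpha)=2^{-(s-1)}\prod_{j=1}^s\Q(c_{\ell_j})$, where $c_{\ell}$ denotes a single cycle of length $\ell$ on a consecutive block, a quantity depending only on $\ell$. Hence it suffices to prove
\[
\Q(\alpha)\le 2^{-(s-1)}\prod_{j=1}^s\Q(c_{\ell_j})=\Q(\tilde\alpha)
\]
for every $\alpha\in[\beta]$. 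Because the two-rail channel is entrywise positive, each $\Q(\alpha)$ is a sum of nonnegative diagram weights, so all quantities are real and nonnegative and the whole argument can be carried out with the diagrams of \ref{sub:maximal-q}.

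I would then use the cyclic invariance $\Q(\alpha)=\Q(c\,\alpha\,c^{-1})$ with $c=(1\,2\cdots r)$, which follows from cyclicity of the trace together with a relabelling of the summation index $x$. Drawing $\{1,\dots,r\}$ on a circle, the non-overlapping permutations are exactly those whose cycle supports are disjoint arcs, and the desired bound says that interleaving or nesting the cycles can only lower $\Q$. My main route is an induction on the number of cycles $s$: using the cyclic symmetry I may assume the element $1$ sits at the start, and I peel off the cycle $C_1$ containing it, aiming for $\Q(\alpha)\le\tfrac12\Q(c_{|C_1|})\,\Q(\alpha')$ where $\alpha'$ is the permutation induced on the remaining elements.

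The mechanism of the peeling is the reflection identity. Splitting $\D(\alpha)=\D_0(\alpha)\cup\D_0^{-1}(\alpha)$ and using $\delta+\delta^{-1}\propto\one$ from Lemma~\ref{lem:prop-unity}, the sum over the track assignments belonging to $C_1$ produces a scalar multiple of $\one$ whenever the sites of $C_1$ form a consecutive arc; that multiple detaches $C_1$ and reproduces the factor $\tfrac12\Q(c_{|C_1|})$, exactly as in the proof of Lemma~\ref{lem:q-factoring}. When $C_1$ is interleaved with the other cycles the reflection sum no longer yields a clean multiple of $\one$ inside the trace; instead one is left with positive-entry matrices sandwiched between foreign factors, and the two-rail ordering $\mu\ge\nu\ge0$, $\lambda\ge\kappa\ge0$ should force this interleaved value to be bounded above by the detached one, each elementary uncrossing being weight-nondecreasing term by term.

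The main obstacle is precisely this interleaving estimate. Two points make it delicate. First, the trace orders its factors by the linear index $i$ while the cycles are indexed through $\alpha$, so the factors of $C_1$ are generally scattered among those of the other cycles and cannot be grouped without incurring a cost; bounding that cost is the crux. Second, conjugation by a single adjacent transposition does not change $\Q$ monotonically, so one cannot simply sort $\alpha$ one swap at a time; I expect to need the cyclic symmetry to fix a canonical order of consolidation (treating the cycle through the current leftmost element first) and to carry the equality case, namely that equality holds exactly when $\alpha$ is non-overlapping up to cyclic rotation. Once the bound is established the lemma follows, and together with Lemma~\ref{lem:non-overlap-max} it gives Proposition~\ref{pro:maximal-q}.
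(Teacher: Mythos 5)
Your setup is sound: reducing to the bound $\Q(\alpha)\le 2^{-(s-1)}\prod_j\Q(c_{\ell_j})$ is exactly the right target, and your observations (all non-overlapping members share the same $\Q$ by Lemma~\ref{lem:q-factoring}; all diagram weights are nonnegative; $\Q$ is invariant under conjugation by the full cycle) are correct. But the proposal has a genuine gap at precisely the point where the lemma lives: the ``interleaving estimate'' $\Q(\alpha)\le\tfrac12\Q(c_{|C_1|})\,\Q(\alpha')$ is asserted, not proven. You say the two-rail ordering ``should force'' the interleaved value below the detached one and that each ``elementary uncrossing'' is ``weight-nondecreasing term by term,'' while simultaneously conceding that conjugation by a single adjacent transposition is not monotone for $\Q$ --- which undercuts any term-by-term uncrossing scheme. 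Nothing in the proposal supplies the mechanism that controls the cost of scattering the factors of $C_1$ among foreign tiles inside the trace, and the peeling induction does not go through without it. Moreover, when a cycle is interleaved, summing that cycle's reflections does \emph{not} produce a multiple of $\one$ sandwiched in a fixed position (its tiles sit in several disconnected places in the product), so Lemma~\ref{lem:prop-unity} cannot be applied cycle-by-cycle the way you intend.

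The paper closes this gap with a global, non-inductive comparison that you may want to study. Write $\beta=\gamma\alpha\gamma^{-1}$ with $\alpha$ non-overlapping; the index substitution $x_i\to x_{\gamma^{-1}(i)}$ gives a bijection between the terms (diagrams) of $\Q(\alpha)$ and those of $\Q(\beta)$, realized diagrammatically by permuting the tiles $\A_{x_i x_{\alpha(i)}}$ according to $\gamma$; in particular matched diagrams use exactly the same multiset of tiles. One then fixes, for each cycle $\alpha_i$, a subdiagram $\delta_i\in\D_0(\alpha_i)$ and compares the subsums over the $2^s$ reflection choices $\delta_i^{\pm1}$. For the non-overlapping $\alpha$ each cycle's strong pieces all lie on one rail, so its reflection-summed contribution is $2\prod_i\bigl(\lambda^{m_i}\mu^{n_i}+\kappa^{m_i}\nu^{n_i}\bigr)$; for $\beta$ the permuted tiles may be misaligned, splitting each cycle's weights into mixed products $\lambda^{m_i-\tilde m_i}\kappa^{\tilde m_i}\mu^{n_i-\tilde n_i}\nu^{\tilde n_i}$ and its complement. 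The inequality $\lambda^{m}\mu^{n}+\kappa^{m}\nu^{n}\ge\lambda^{m-\tilde m}\kappa^{\tilde m}\mu^{n-\tilde n}\nu^{\tilde n}+\lambda^{\tilde m}\kappa^{m-\tilde m}\mu^{\tilde n}\nu^{n-\tilde n}$, valid because $\lambda\ge\kappa\ge0$ and $\mu\ge\nu\ge0$ (a rearrangement-type bound: pairing strong with strong dominates mixed pairings), then gives $\Q(\alpha)\ge\Q(\beta)$ subsum by subsum. This matched-diagram comparison is the missing idea; without it, or an equivalent substitute, your induction does not constitute a proof.
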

\begin{proof}
Let $\alpha=\alpha_{1}\dots\alpha_{s}$ be a non-overlapping member
of the class and $\beta=\gamma\alpha\gamma^{-1}$ be any other member
of the class. First, remember

\begin{equation}
\Q(\alpha)=\sum_{\substack{\{x_{i}\}\\
i=1\dots r
}
}\Tr{\prod_{i=1}^{r}\A_{x_{i}x_{\alpha(i)}}}=\sum_{\delta_{1}\in\D(\alpha_{1})}\dots\sum_{\delta_{s}\in\D(\alpha_{s})}\Tr{\delta_{1}\dots\delta_{s}}\,.\label{eq:q-two-ways}
\end{equation}
With the first way of writing $\Q(\alpha)$ in mind we define a 1-1-mapping
between terms in the sum of $\Q(\alpha)$ and $\Q(\beta)$ via a mapping
of indices $x_{i}\to x_{\gamma^{-1}(i)}$ (or $x_{\gamma(i)}\to x_{i}$).
Then the products of matrices are mapped like
\begin{align*}
\A_{x_{1}x_{\alpha(1)}}\dots\A_{x_{r}x_{\alpha(r)}} & \to\A_{x_{\gamma(1)}x_{\gamma(\beta(1))}}\dots\A_{x_{\gamma(r)}x_{\gamma(\beta(r))}}\\
 & =\A_{x_{\gamma(1)}x_{\alpha(\gamma(1))}}\dots\A_{x_{\gamma(r)}x_{\alpha(\gamma(r))}}\,.
\end{align*}
In terms of diagrams this corresponds to permuting the {}``tiles''
(crossings or straight pieces) $\A_{x_{i}x_{\alpha(i)}}$ according
to the permutation $\gamma$. Some examples are shown in Appendix~\ref{app:q-conjugacy-class-example}.
In particular, this mapping does not change the number of any kind
of tile $\A_{00}$, $\A_{01}$, $\A_{10}$ or $\A_{11}$.

Now, consider the second way of writing $\Q(\alpha)$ in \eqref{eq:q-two-ways}
and split the sums over subdiagrams $\D(\alpha_{i})$ according to
$\D(\alpha_{i})=\D_{0}(\alpha_{i})\cup\D_{0}^{-1}(\alpha_{i})$
\[
\Q(\alpha)=\sum_{\delta_{1}\in\D_{0}(\alpha_{1})}\dots\sum_{\delta_{s}\in\D_{0}(\alpha_{s})}\sum_{\substack{\{t_{i}=\pm1\}\\
i=1\dots s
}
}\Tr{\delta_{1}^{t_{1}}\dots\delta_{s}^{t_{s}}}\,.
\]
We consider a subsum
\[
\sum_{\substack{\{t_{i}=\pm1\}\\
i=1\dots s
}
}\Tr{\delta_{1}^{t_{1}}\dots\delta_{s}^{t_{s}}}
\]
for fixed subdiagrams $\delta_{i}\in\D_{0}(\alpha_{i})$. In the following
we will prove that the contribution of this subsum to $\Q(\alpha)$
is larger or equal to the contribution of the subsum of the corresponding
diagrams to $\Q(\beta)$. From this it immediately follows that $\Q(\alpha)\geq\Q(\beta)$.

It follows a general proof of the inequality between corresponding
subsums. For illustration one subsum is evaluated in full detail with
diagrams in Appendix~\ref{app:q-conjugacy-class-example}.

For the non-overlapping permutation $\alpha$ in the subdiagrams $\delta_{i}$
all the straight lines and dashed lines are aligned, i.e. we have
a weak and a strong rail. Let $m_{i}$ be the number of crossings
and $n_{i}$ the number of straight pieces in $\delta_{i}$. The strong
rail in subdiagram $\delta_{i}$ contributes a factor $\lambda^{m_{i}}\mu^{n_{i}}$
and the weak rail contributes the factor $\kappa^{m_{i}}\nu^{n_{i}}$.
Summing over reflections the subsum equals
\[
2\prod_{i=1}^{s}\lambda^{m_{i}}\mu^{n_{i}}+\kappa^{m_{i}}\nu^{n_{i}}\,.
\]
On the other hand, for the possibly overlapping $\beta$ some of the
tiles are permuted and for one particular subdiagram, not all the
strong rail pieces might be on the same rail. Let $\tilde{m}_{i}$
be the number of crossings that are thus misaligned and $\tilde{n}_{i}$
the number of straight pieces that are misaligned. The two rails in
subdiagram $\delta_{i}$ now contribute the factors $\lambda^{m_{i}-\tilde{m}_{i}}\kappa^{\tilde{m}_{i}}\mu^{n_{i}-\tilde{n}_{i}}\nu^{\tilde{n}_{i}}$
and $\lambda^{\tilde{m}_{i}}\kappa^{m_{i}-\tilde{m}_{i}}\mu^{\tilde{n}_{i}}\nu^{n_{i}-\tilde{n}_{i}}$.
Summing over reflections the subsum adding to $\Q(\beta)$ equals
\[
2\prod_{i=1}^{s}\lambda^{m_{i}-\tilde{m}_{i}}\kappa^{\tilde{m}_{i}}\mu^{n_{i}-\tilde{n}_{i}}\nu^{\tilde{n}_{i}}+\lambda^{\tilde{m}_{i}}\kappa^{m_{i}-\tilde{m}_{i}}\mu^{\tilde{n}_{i}}\nu^{n_{i}-\tilde{n}_{i}}\,.
\]
Because $\lambda\geq\kappa$ and $\mu\geq\nu$ it follows that $\lambda^{m_{i}}\mu^{n_{i}}+\kappa^{m_{i}}\nu^{n_{i}}\geq\lambda^{m_{i}-\tilde{m}_{i}}\kappa^{\tilde{m}_{i}}\mu^{n_{i}-\tilde{n}_{i}}\nu^{\tilde{n}_{i}}+\lambda^{\tilde{m}_{i}}\kappa^{m_{i}-\tilde{m}_{i}}\mu^{\tilde{n}_{i}}\nu^{n_{i}-\tilde{n}_{i}}$
(the strong and weak rail dominate the two mixed rails), and hence
we have the desired inequality between corresponding contributions
to $\Q(\alpha)$ and $\Q(\beta)$.\end{proof}
\begin{prop}
For $\A=\Delta_{\lambda}$ the depolarizing channel in two dimensions,
$\Q(\alpha)$ is either maximal when $\alpha=(1\dots r)$ or when
$\alpha=\id$. \label{pro:maximal-q}\end{prop}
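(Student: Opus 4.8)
The goal is to show that for the qubit depolarizing channel, the maximum of $\Q(\alpha)$ over all $\alpha \in Sym(r)$ is attained either at $\alpha = \id$ or at the full cycle $\alpha = (1\dots r)$. The plan is to combine the three preceding lemmas in a two-stage reduction: first reduce an arbitrary permutation to a non-overlapping representative of its conjugacy class, then reduce among non-overlapping permutations to the two extremal cases.

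First I would invoke Lemma~\ref{lem:q-conjugacy-class}. Since $\Q$ depends on $\alpha$ through the cycle structure in a way that is maximized on non-overlapping members of each conjugacy class, it suffices to maximize $\Q$ over non-overlapping permutations only. Every conjugacy class in $Sym(r)$ (determined by a partition of $r$ into cycle lengths) contains a non-overlapping representative, namely the one whose cycles permute blocks of consecutive integers, and by that lemma the maximum of $\Q$ over the whole class is achieved there. Hence $\Qmax = \max_{\alpha \text{ non-overlapping}} \Q(\alpha)$.

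Next I would apply Lemma~\ref{lem:non-overlap-max}, which states precisely that among non-overlapping permutations $\Q$ is maximal at either $\alpha = \id$ or $\alpha = (1\dots r)$. Chaining the two reductions gives
\[
\Qmax = \max_{\alpha \in Sym(r)} \Q(\alpha) = \max\{\Q(\id),\, \Q((1\dots r))\},
\]
which is the claim. The bulk of the genuine work has already been discharged inside the lemmas: Lemma~\ref{lem:q-conjugacy-class} handles the passage from overlapping to non-overlapping permutations via the tile-permutation bijection and the inequality $\lambda^{m}\mu^{n} + \kappa^{m}\nu^{n} \geq \lambda^{m-\tilde m}\kappa^{\tilde m}\mu^{n-\tilde n}\nu^{\tilde n} + \lambda^{\tilde m}\kappa^{m-\tilde m}\mu^{\tilde n}\nu^{n-\tilde n}$, while Lemma~\ref{lem:non-overlap-max} uses the convexity of $f(x)f(k-x)$ to push cycle lengths to the extremes.

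The only remaining obstacle is a bookkeeping one: I must confirm that the two reduction steps compose cleanly, i.e. that restricting first to non-overlapping permutations does not discard the global maximizer. This is automatic because every $\alpha$ has a conjugate that is non-overlapping and $\Q$ on the class is dominated by its non-overlapping value, so no maximizer is lost. With that observation the proposition follows immediately from the two lemmas, and there is essentially nothing further to compute.
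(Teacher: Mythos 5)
Your proposal is correct and follows essentially the same route as the paper's own proof: it combines Lemma~\ref{lem:q-conjugacy-class} (every conjugacy class has a non-overlapping representative on which $\Q$ is maximal) with Lemma~\ref{lem:non-overlap-max} ($\Q$ over non-overlapping permutations is maximized at $\id$ or $(1\dots r)$), which is precisely the paper's two-step reduction.
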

\begin{proof}
First, consider permutations that consist of non-overlapping $\alpha$,
Lemma~\ref{lem:non-overlap-max} proves that either $\alpha=\id$
or $\alpha=(1\dots r)$ yields the maximum $\Q(\alpha)$ amongst these
permutations. Every conjugacy class has a non-overlapping representant,
and Lemma~\ref{lem:q-conjugacy-class} states that these have maximal
$\Q$-value. Therefore either $\alpha=\id$ or $\alpha=(1\dots r)$
yield the maximal $\Q$-value amongst all the permutations $\alpha\in Symm(r)$.
\end{proof}

\subsection{Bound on Lipschitz constant\label{sub:lipschitz-constant}}

We derive an upper bound on the Lipschitz constant of the function
$f:S^{2^{n+1}-1}\to\R$ 
\[
f(\ket{\phi})=\Tr{\Deltaln(\ket{\phi}\bra{\phi})^{r}}
\]
 with respect to the Euclidean norm on $S^{2^{n+1}-1}\subset\R^{2^{n+1}}$.

We divide the function into four steps and prove bounds on the Lipschitz
constants for each step. The splitting is $f=d\circ c\circ b\circ a$
with
\begin{align*}
a & :\ket{\phi}\to\ket{\phi}\bra{\phi}=\rho\\
b & :\rho\rightarrow\Deltaln(\rho)=\rho'\\
c & :\rho'\rightarrow\textrm{eigenvalues of }\ensuremath{\rho'}=\vec{v}\\
d & :\vec{v}\rightarrow\sum_{i}v_{i}^{r}\,.
\end{align*}
Let $\M_{m}$ be the space of complex $m\times m$ matrices containing
the set of states $\S_{m}=\{\rho\in\M_{m}:\rho=\rho^{*},\tr\rho=1\}$.
\begin{lem}
Let $a(\ket{\phi})=\ket{\phi}\bra{\phi}$ a map $\C^{m}\to\M_{m}$
where $\braket{\phi}{\phi}=1$. Then the Lipschitz constant of $a$
with respect to the Euclidean norm in the domain and the Frobenius
norm $\|M\|_{2}=\Tr{MM^{*}}^{1/2}$ in the range is upper bounded
by $\sqrt{2}$.\label{lem:lipschitz-a}\end{lem}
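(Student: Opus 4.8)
The plan is to reduce the statement to an elementary inequality for the inner product of the two input vectors. Given unit vectors $\ket{\phi},\ket{\psi}\in\C^{m}$, I set $\rho=\ket{\phi}\bra{\phi}$, $\sigma=\ket{\psi}\bra{\psi}$ and abbreviate $z=\braket{\phi}{\psi}$. First I would compute the squared Frobenius distance in the range by expanding $\|\rho-\sigma\|_{2}^{2}=\Tr{(\rho-\sigma)^{2}}=\Tr{\rho^{2}}-2\Tr{\rho\sigma}+\Tr{\sigma^{2}}$. Since $\rho$ and $\sigma$ are rank-one projections we have $\Tr{\rho^{2}}=\Tr{\sigma^{2}}=1$, and a short computation gives $\Tr{\rho\sigma}=|z|^{2}$, so that
\[
\|\rho-\sigma\|_{2}^{2}=2-2|z|^{2}\,.
\]

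Next I would compute the squared Euclidean distance in the domain. Identifying $\C^{m}$ with $\R^{2m}$ so that the Euclidean norm on $\R^{2m}$ coincides with $\sqrt{\braket{\cdot}{\cdot}}$, and using $\braket{\phi}{\phi}=\braket{\psi}{\psi}=1$, one finds
\[
\|\ket{\phi}-\ket{\psi}\|^{2}=2-2\,\mathrm{Re}(z)\,.
\]
With these two identities in hand, the claimed bound $\|\rho-\sigma\|_{2}\le\sqrt{2}\,\|\ket{\phi}-\ket{\psi}\|$ becomes, after squaring, the inequality $2-2|z|^{2}\le 2\bigl(2-2\,\mathrm{Re}(z)\bigr)$, which rearranges to $0\le(\mathrm{Re}(z)-1)^{2}+(\mathrm{Im}(z))^{2}$. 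This holds for every $z\in\C$, so the ratio $\|\rho-\sigma\|_{2}/\|\ket{\phi}-\ket{\psi}\|$ is bounded by $\sqrt{2}$ for all distinct inputs, and the Lipschitz constant of $a$ is at most $\sqrt{2}$.

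There is no real obstacle in this argument; it is a direct computation once the two squared norms are written in terms of $z$. The only points that deserve a word of care are the identification of the complex Euclidean norm on $\C^{m}$ with the real Euclidean norm on $\R^{2m}$ used to define the sphere $S^{2m-1}$, and the fact that $a$ forgets the global phase and is therefore not injective. The latter only shrinks the ratio further — it vanishes when $\ket{\psi}=e^{i\theta}\ket{\phi}$ — and so does not affect the upper bound. If desired one can also record that $\sqrt{2}$ is sharp: taking $\ket{\psi}=\cos t\,\ket{\phi}+\sin t\,\ket{\chi}$ with $\ket{\chi}\perp\ket{\phi}$ gives ratio $\sqrt{1+\cos t}\to\sqrt{2}$ as $t\to 0$, so the bound cannot be improved.
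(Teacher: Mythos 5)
Your proposal is correct and follows essentially the same route as the paper's proof: both express the squared Euclidean distance as $2-2\,\mathrm{Re}(z)$ and the squared Frobenius distance as $2-2|z|^{2}$ in terms of the overlap $z=\braket{\phi}{\psi}$, and then verify the resulting elementary inequality (the paper via $\mathrm{Re}(z)\le|z|$ and $(1-|z|)^{2}\ge0$, you by completing the square as $(\mathrm{Re}(z)-1)^{2}+(\mathrm{Im}(z))^{2}\ge0$, which is a marginally cleaner one-step verification). Your added remark that $\sqrt{2}$ is sharp is a nice bonus not present in the paper, but the core argument is the same.
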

\begin{proof}
For $\braket{\phi}{\phi}=\braket{\psi}{\psi}=1$ set $c=\braket{\phi}{\psi}$.
Now
\[
\|\ket{\phi}-\ket{\psi}\|_{2}^{2}=\braket{\phi}{\phi}-\braket{\phi}{\psi}-\braket{\psi}{\phi}+\braket{\psi}{\psi}=2-2\Re(c)\,,
\]
and
\[
\|\ket{\phi}\bra{\phi}-\ket{\psi}\bra{\psi}\|_{2}^{2}=\braket{\phi}{\phi}^{2}-2|\braket{\phi}{\psi}|^{2}+\braket{\psi}{\psi}^{2}=2-2|c|^{2}\,.
\]
Because of $\Re(c)\leq|c|$ and the inequality derived as follows
\begin{align*}
(1-|c|)^{2} & \geq0\\
2-2|c| & \geq1-|c|^{2}\\
2(2-2|c|) & \geq2-2|c|^{2}\,,
\end{align*}
we arrive at $2(2-2\Re(c))\geq2-2|c|^{2}$ or $\sqrt{2}\|\ket{\phi}-\ket{\psi}\|_{2}\geq\|\ket{\phi}\bra{\phi}-\ket{\psi}\bra{\psi}\|_{2}$.\end{proof}
\begin{lem}
Let $b(\rho)=\Deltaln(\rho)$ a map $\S_{2^{n}}\to\S_{2^{n}}$, then
the Lipschitz constant of $b$ with respect to the Frobenius norm
in domain and range is upper bounded by $\kappa^{n}$, where $\kappa=\lambda+\frac{1-\lambda}{\sqrt{2}}$
so $0<\kappa<1$ for $0<\lambda<1$.\label{lem:lipschitz-b}\end{lem}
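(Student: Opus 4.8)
The plan is to exploit the product structure of $\Deltaln$, reducing the $n$-qubit estimate to a single-qubit one and then multiplying the single-qubit bounds. Since $b=\Deltaln$ is the restriction to $\S_{2^{n}}$ of a \emph{linear} map on the complex matrices $\M_{2^{n}}$, its Lipschitz constant with respect to the Frobenius norm is simply the operator norm of that linear map; it suffices to bound this operator norm on all of $\M_{2^{n}}$. First I would write the single-qubit channel as the combination
\[
\Delta_{\lambda}=\lambda\,\id+(1-\lambda)\,\mathcal{T},\qquad \mathcal{T}(\rho)=\tfrac{\one}{2}\Tr{\rho},
\]
so that $\Deltaln=(\lambda\,\id+(1-\lambda)\,\mathcal{T})^{\ot n}$, and then expand this $n$-fold tensor power multilinearly.

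Next I would bound the two single-qubit building blocks in Frobenius operator norm. The identity is an isometry, so $\mathrm{Lip}(\id)=1$. For the completely depolarizing map I would compute, for $M\in\M_{2}$,
\[
\|\mathcal{T}(M)\|_{2}=\Bigl\|\tfrac{\one}{2}\Tr{M}\Bigr\|_{2}=\frac{|\Tr{M}|}{2}\,\|\one\|_{2}=\frac{|\Tr{M}|}{\sqrt{2}},
\]
using $\|\one\|_{2}=\sqrt{2}$ in dimension $2$, and then control $|\Tr{M}|$ by $\|M\|_{2}$ via Cauchy--Schwarz to obtain $\mathrm{Lip}(\mathcal{T})\le 1/\sqrt{2}$. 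This single-qubit constant for $\mathcal{T}$ is the one delicate estimate on which the whole bound rests.

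Finally I would assemble the $n$-qubit bound. Expanding the tensor power gives
\[
\Deltaln=\sum_{S\subseteq\{1,\dots,n\}}\lambda^{\,n-|S|}(1-\lambda)^{|S|}\bigotimes_{i=1}^{n}\Phi_{i}^{(S)},\qquad \Phi_{i}^{(S)}=\begin{cases}\mathcal{T} & i\in S\\ \id & i\notin S\end{cases}.
\]
Applying the triangle inequality over the $2^{n}$ terms, and using that the Frobenius norm is multiplicative under tensor products -- so that the Lipschitz constant of a tensor product of single-qubit maps is the product of their Lipschitz constants -- each term contributes at most $\lambda^{\,n-|S|}\bigl((1-\lambda)/\sqrt{2}\bigr)^{|S|}$. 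Summing over $S$ by the binomial theorem yields
\[
\mathrm{Lip}(b)\le\sum_{k=0}^{n}\binom{n}{k}\lambda^{\,n-k}\Bigl(\tfrac{1-\lambda}{\sqrt{2}}\Bigr)^{k}=\Bigl(\lambda+\tfrac{1-\lambda}{\sqrt{2}}\Bigr)^{n}=\kappa^{n},
\]
which is the claim, and $0<\kappa<1$ for $0<\lambda<1$ is immediate. The main obstacle is the single-qubit step: pinning down $\mathrm{Lip}(\mathcal{T})\le 1/\sqrt{2}$ cleanly, and justifying that operator norms (hence Lipschitz constants) are submultiplicative under the tensor product for the Frobenius norm, so that the binomial collection above is legitimate.
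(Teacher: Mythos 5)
Your plan founders on the single-qubit estimate that you yourself identify as the step ``on which the whole bound rests.'' For $\mathcal{T}(M)=\tfrac{\one}{2}\Tr{M}$ on all of $\M_{2}$, Cauchy--Schwarz gives only $|\Tr{M}|\le\|\one\|_{2}\|M\|_{2}=\sqrt{2}\,\|M\|_{2}$, hence $\|\mathcal{T}(M)\|_{2}=|\Tr{M}|/\sqrt{2}\le\|M\|_{2}$: the constant you get is $1$, not $1/\sqrt{2}$. Indeed $\mathcal{T}(\one)=\one$ shows the Frobenius operator norm of $\mathcal{T}$ is exactly $1$ ($\mathcal{T}$ is the orthogonal projection onto $\mathrm{span}\{\one\}$ in Hilbert--Schmidt space). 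With the correct per-factor constants ($\mathrm{Lip}(\id)=\mathrm{Lip}(\mathcal{T})=1$) your binomial assembly collapses to $\sum_{k}\binom{n}{k}\lambda^{n-k}(1-\lambda)^{k}=1$. This failure is not repairable within your framework: any argument that, like yours, bounds the operator norm of $\Deltaln$ on \emph{all} of $\M_{2^{n}}$ is doomed, because the channel is unital, $\Deltaln(\one)=\one$, so that unrestricted norm is exactly $1$ and can never produce $\kappa^{n}<1$.

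The paper's proof gets the factor $(1/\sqrt{2})^{|J|}$ from a genuinely global restriction of the domain: it bounds $\sup\|\Deltaln(\rho)\|_{2}/\|\rho\|_{2}$ only over the set $S$ of differences of two pure states $\ket{\phi}\bra{\phi}-\ket{\psi}\bra{\psi}$, which is all that the Levy's-lemma application requires. Writing $\Deltaln=\sum_{J}\lambda^{n-|J|}(1-\lambda)^{|J|}\,\tr_{J}\otimes(\one/2)^{\otimes|J|}$, any $\rho\in S$ has the form $\alpha(\ket 0\bra 0-\ket 1\bra 1)$ with $\ket 0\perp\ket 1$, so $\|\tr_{J}\rho\|_{2}\le\|\rho\|_{2}$ (partial traces of pure states are states, whence $\Tr{(\rho_{0}-\rho_{1})^{2}}\le 2$), while tensoring with $(\one/2)^{\otimes|J|}$ contracts the Frobenius norm by $2^{-|J|/2}$; that is where each term picks up $((1-\lambda)/\sqrt{2})^{|J|}$. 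This traceless rank-two structure is a property of the $n$-qubit matrix as a whole---$S$ is not a tensor product of single-qubit sets---so it cannot be captured by per-qubit Lipschitz constants plus multiplicativity, which is precisely the mechanism your proposal relies on. The restriction really is indispensable: even enlarging $S$ to differences of arbitrary states already breaks the bound, since $\rho=\sigma\otimes\ket 0\bra 0$, $\tau=\sigma\otimes\ket 1\bra 1$ with $\sigma$ maximally mixed on $n-1$ qubits gives ratio $\lambda$, which exceeds $\kappa^{n}$ for large $n$.
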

\begin{proof}
It is useful to use the notation 
\[
\Deltaln=\sum_{J\subset\Z_{n}}\lambda^{n-|J|}(1-\lambda)^{|J|}\tr_{J}\otimes\left(\frac{\one}{2}\right)^{\otimes|J|}
\]
where $\tr_{J}$ is the partial trace over the systems with indices
in $J$. Notice that we use a loose notation of the tensor product
as the systems that are partially traced out and replaced by the totally
mixed states are not necessarily all on the right side of the tensor
product.

We will bound the operator norm $\|\A\|_{op}:=\sup_{\rho\in S}\frac{\|\A(\rho)\|_{2}}{\|\rho\|_{2}}$
where the supremum is over the set $S=\left\{ \ket{\phi}\bra{\phi}-\ket{\psi}\bra{\psi}\,\middle|\,\ket{\phi},\ket{\psi}\in\S_{2^{n}-1}\right\} $.
Because of 
\[
\|\A(\rho)-\A(\tau)\|_{2}=\|\A(\rho-\tau)\|_{2}\leq\|\A\|_{op}\|\rho-\tau\|_{2}
\]
for a linear map $\A$ bounding $\|\A\|_{op}$ immediately gives a
bound of the Lipschitz constant as well. Now 
\begin{align*}
\|\Deltaln\|_{op} & \leq\sum_{J}\lambda^{n-|J|}(1-\lambda)^{|J|}\left\Vert \tr_{J}\otimes\left(\frac{\one}{2}\right)^{\otimes|J|}\right\Vert _{op}\\
 & \leq\sum_{J}\lambda^{n-|J|}\fracp{1-\lambda}{\sqrt{2}}^{|J|}\\
 & =\sum_{k=0}^{n}{n \choose k}\lambda^{n-k}\fracp{1-\lambda}{\sqrt{2}}^{k}\\
 & =\left(\lambda+\frac{1-\lambda}{\sqrt{2}}\right)^{n}=\kappa^{n}\,,
\end{align*}
 where we used the bound 
\begin{align}
\left\Vert \tr_{\{1\dots k\}}\otimes\left(\frac{\one}{2}\right)^{\otimes k}\right\Vert _{op}^{2} & =\sup_{\rho\in S}\frac{\tr\left(\tr_{\{1\dots k\}}\rho\otimes\left(\frac{\one}{2}\right)^{\otimes k}\right)^{2}}{\tr\rho^{2}}\nonumber \\
 & =\tr\left(\left(\frac{\one}{2}\right)^{\otimes k}\right)^{2}\sup_{\rho\in S}\frac{\tr\left(\tr_{\{1\dots k\}}\rho\right)^{2}}{\tr\rho^{2}}\label{eq:last-supremum}\\
 & =\left(\frac{1}{2}\right)^{k}\,.\nonumber 
\end{align}

The supremum in \eqref{eq:last-supremum} was evaluated as follows.
First, consider that $\rho=\ket{\phi}\bra{\phi}-\ket{\psi}\bra{\psi}$
can be written as $\rho=\alpha\ket 0\bra 0-\alpha\ket 1\bra 1$ with
$0\leq\alpha\le1$, where $\ket 0$ and $\ket 1$ are orthonormal
states. Then the supremum runs over all possible orientations of $\ket 0$
and $\ket 1$
\begin{align*}
\sup_{\rho\in S}\frac{\tr\left(\tr_{\{1\dots k\}}\rho\right)^{2}}{\tr\rho^{2}} & =\sup_{\ket 0,\ket 1}\frac{\tr\left(\tr_{\{1\dots k\}}(\alpha\ket 0\bra 0-\alpha\ket 1\bra 1)\right)^{2}}{\tr(\alpha\ket 0\bra 0-\alpha\ket 1\bra 1)^{2}}\\
 & =\sup_{\ket 0,\ket 1}\frac{\tr\left(\tr_{\{1\dots k\}}(\ket 0\bra 0-\ket 1\bra 1)\right)^{2}}{\tr(\ket 0\bra 0-\ket 1\bra 1)^{2}}\\
 & =\sup_{\ket 0,\ket 1}\frac{\tr\left(\tr_{\{1\dots k\}}(\ket 0\bra 0-\ket 1\bra 1)\right)^{2}}{2}\,.
\end{align*}
Now assume $\rho_{0}=\tr_{\{1\dots k\}}\ket 0\bra 0$ and $\rho_{1}=\tr_{\{1\dots k\}}\ket 1\bra 1$
are arbitrary density matrices. Then
\begin{align*}
\sup_{\ket 0,\ket 1}\frac{\tr\left(\tr_{\{1\dots k\}}(\ket 0\bra 0-\ket 1\bra 1)\right)^{2}}{2} & =\sup_{\rho_{0},\rho_{1}}\frac{\Tr{\rho_{0}-\rho_{1}}^{2}}{2}\\
 & =\sup_{\rho_{0},\rho_{1}}\frac{\Tr{\rho_{0}^{2}+\rho_{1}^{2}-2\rho_{0}\rho_{1}}}{2}=1\,.
\end{align*}
The last equality follows from the fact, that $\tr\rho_{0,1}^{2}\leq1$
and $\tr\rho_{0}\rho_{1}\geq0$. The suprema are achieved when $\ket{\phi}=\ket{00}$
and $\ket{\psi}=\ket{11}$ so that $\rho_{0}=\ket 0\bra 0$ and $\rho_{1}=\ket 1\bra 1$.\end{proof}
\begin{rem}
Let $c:\S_{m}\rightarrow\R^{m}$ be the map that sends density matrices
to their eigenvalues, ordered high to low. The fact that the Lipschitz
constant of $c$ is upper bounded by $1$ is equivalent to the Hoffman-Wielandt
inequality \cite{hoffman_wielandt}
\[
\|\rho-\tau\|_{2}\geq\|c(\rho)-c(\tau)\|_{2}\,.
\]
\label{lem:lipschitz-c}\end{rem}
\begin{lem}
Let $d(\vec{v})=\sum_{i}v_{i}^{r}$ a map from $\{\vec{w}\in\R_{+}^{m}|\sum_{i}w_{i}=1\}$
to $\R$. Then the Lipschitz constant of $d$ is upper bounded by
$r$. \label{lem:lipschitz-d}\end{lem}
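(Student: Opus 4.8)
The plan is to control the gradient of $d$ on the probability simplex and then invoke the mean value inequality, exploiting that the domain is convex. Write $\Sigma=\{\vec{w}\in\R_{+}^{m}\mid\sum_{i}w_{i}=1\}$. Since $\Sigma$ is convex and $d$ is a polynomial, $d$ is continuously differentiable on a neighbourhood of $\Sigma$, with gradient $\nabla d(\vec{v})=(r\,v_{1}^{r-1},\dots,r\,v_{m}^{r-1})$. The whole argument then reduces to a uniform bound on $\|\nabla d\|_{2}$ over $\Sigma$.

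The key step is this gradient estimate. We have $\|\nabla d(\vec{v})\|_{2}=r\bigl(\sum_{i}v_{i}^{2(r-1)}\bigr)^{1/2}$, so it suffices to show $\sum_{i}v_{i}^{2(r-1)}\le1$ on $\Sigma$. I would observe that every coordinate of a point of $\Sigma$ satisfies $0\le v_{i}\le1$, and that for $r\ge2$ the exponent obeys $2(r-1)\ge1$. Raising a number in $[0,1]$ to an exponent at least $1$ can only decrease it, so $v_{i}^{2(r-1)}\le v_{i}$ for each $i$; summing and using $\sum_{i}v_{i}=1$ gives $\sum_{i}v_{i}^{2(r-1)}\le1$. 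Hence $\|\nabla d(\vec{v})\|_{2}\le r$ for all $\vec{v}\in\Sigma$.

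Finally I would pass from the pointwise gradient bound to the Lipschitz bound. For any $\vec{v},\vec{w}\in\Sigma$ the segment $t\mapsto\vec{w}+t(\vec{v}-\vec{w})$, $t\in[0,1]$, stays in $\Sigma$ by convexity, and the fundamental theorem of calculus together with Cauchy--Schwarz gives
\[
|d(\vec{v})-d(\vec{w})|=\Bigl|\int_{0}^{1}\nabla d(\vec{w}+t(\vec{v}-\vec{w}))\cdot(\vec{v}-\vec{w})\,dt\Bigr|\le\sup_{\vec{u}\in\Sigma}\|\nabla d(\vec{u})\|_{2}\,\|\vec{v}-\vec{w}\|_{2}\le r\,\|\vec{v}-\vec{w}\|_{2},
\]
which is exactly the asserted bound with respect to the Euclidean norm.

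There is no genuinely hard step here; the only point requiring care is the elementary inequality $\sum_{i}v_{i}^{2(r-1)}\le1$, which is where the hypothesis $r\ge2$ enters (it makes the exponent at least $1$). I would note in passing that for $r=1$ the map $d$ is constant on $\Sigma$, so the bound holds trivially, and that the convexity of $\Sigma$ is what licenses the segment integration above.
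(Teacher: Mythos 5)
Your proposal is correct and follows essentially the same route as the paper: compute $\nabla d$, bound $\|\nabla d\|_{2}\le r$ on the simplex, and integrate along the connecting segment. The only difference is that you spell out the justification of $\sum_{i}v_{i}^{2(r-1)}\le1$ (via $v_{i}\in[0,1]$ and exponent $\ge1$) and the role of convexity, both of which the paper leaves implicit.
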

\begin{proof}
We have $\frac{\partial d}{\partial v_{j}}=rv_{j}^{r-1}$ so 
\[
\sup_{\vec{v}\in\mathrm{Dom}d}|\vec{\nabla}d|^{2}=\sup\sum r^{2}v_{i}^{2(r-1)}\leq r^{2}\,.
\]
 By integration we get 
\[
|d(\vec{v})-d(\vec{w})|\leq r\|\vec{v}-\vec{w}\|_{2}\,.
\]
\end{proof}
\begin{prop}
The Lipschitz constant $\eta$ of $\Tr{\Deltaln(\ket{\phi}\bra{\phi})^{r}}$,
with respect to the Euclidean norm in the domain, is upper bounded
by $\sqrt{2}r\kappa^{n}$, with $\kappa$ as in Lemma~\ref{lem:lipschitz-b}.\label{pro:lipschitz-constant}\end{prop}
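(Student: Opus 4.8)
The plan is to exploit the decomposition $f=d\circ c\circ b\circ a$ introduced above, together with the elementary fact that the Lipschitz constant of a composition of Lipschitz maps is at most the product of the individual Lipschitz constants. Concretely, if $g$ has Lipschitz constant $L_g$ as a map $(X,\|\cdot\|_X)\to(Y,\|\cdot\|_Y)$ and $h$ has Lipschitz constant $L_h$ as a map $(Y,\|\cdot\|_Y)\to(Z,\|\cdot\|_Z)$, then
\[
\|h(g(x))-h(g(x'))\|_Z\leq L_h\,\|g(x)-g(x')\|_Y\leq L_h L_g\,\|x-x'\|_X,
\]
so that $h\circ g$ has Lipschitz constant at most $L_h L_g$. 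Iterating this observation across the four maps is essentially the whole argument.

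First I would verify that the norms on the intermediate spaces line up along the chain, so that the composition bound genuinely applies. The map $a$ sends the Euclidean norm on $S^{2^{n+1}-1}\subset\R^{2^{n+1}}$ to the Frobenius norm on $\M_{2^n}$ (Lemma~\ref{lem:lipschitz-a}); the map $b$ sends Frobenius to Frobenius (Lemma~\ref{lem:lipschitz-b}); the map $c$ sends Frobenius to the Euclidean norm on the eigenvalue vector via Hoffman--Wielandt (Remark~\ref{lem:lipschitz-c}); and the map $d$ sends that Euclidean norm to $\R$ (Lemma~\ref{lem:lipschitz-d}). Since the codomain norm of each step coincides with the domain norm of the next, the telescoping product is legitimate, and I would then simply multiply the four bounds
\[
\eta\leq\sqrt{2}\cdot\kappa^{n}\cdot 1\cdot r=\sqrt{2}\,r\,\kappa^{n}.
\]

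There is no analytic obstacle remaining, since all the substance has already been placed in the four preceding statements; the proposition is purely a bookkeeping step. The only point demanding a little care is that each map must be applied on the image of the previous one, and that the Lipschitz bounds were established on domains large enough to contain these images. In particular, the image of $a$ consists of pure-state density matrices, so the difference $\ket{\phi}\bra{\phi}-\ket{\psi}\bra{\psi}$ lands precisely in the set $S$ over which the operator-norm bound for $b$ was computed; this is exactly why the sharp factor $\kappa^{n}$, rather than a cruder bound, is available. Likewise $b$ maps states to states, and the eigenvalues produced by $c$ form a probability vector, so $d$ is evaluated on its stated domain $\{\vec w:\sum_i w_i=1,\ w_i\geq 0\}$, completing the chain.
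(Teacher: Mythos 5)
Your proposal is correct and matches the paper's proof exactly: both compose the four Lipschitz bounds from Lemmas~\ref{lem:lipschitz-a}, \ref{lem:lipschitz-b}, \ref{lem:lipschitz-d} and Remark~\ref{lem:lipschitz-c}, multiplying them to obtain $\sqrt{2}\cdot\kappa^{n}\cdot1\cdot r$. Your added checks that the intermediate norms match up and that each map's image lies in the next map's domain are sound refinements of the same bookkeeping argument, not a different route.
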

\begin{proof}
From $\Tr{\Deltaln(\ket{\phi}\bra{\phi})^{r}}=d\circ c\circ b\circ a(\ket{\phi})$
with $a$, $b$, $c$ and $d$ as defined in Lemmas/Remark~\ref{lem:lipschitz-a}-\ref{lem:lipschitz-d}.
Thus, we simply combine the upper bounds of the lemmas and get the
bound $\sqrt{2}\cdot\kappa^{n}\cdot1\cdot r$.
\end{proof}

\section{Other results\label{sec:Other-results}}

\subsection{$\Q((1\dots r))$ for a more general qubit channel}

For a channel $\A$ that maps that scales the Bloch sphere like 
\[
\vec{n}\rightarrow\begin{pmatrix}\lambda_{1} & 0 & 0\\
0 & \lambda_{2} & 0\\
0 & 0 & \lambda_{3}
\end{pmatrix}\vec{n}\,,
\]
the Choi-Jamiolkowski representation is
\[
Choi(\Delta_{\lambda})=\begin{pmatrix}\mu & 0 & 0 & \lambda\\
0 & \nu & \kappa & 0\\
0 & \kappa & \nu & 0\\
\lambda & 0 & 0 & \mu
\end{pmatrix}\,,
\]
i.e. this is a two-rail channel with $\mu=\frac{1+\lambda_{3}}{2}$,
$\nu=\frac{1-\lambda_{3}}{2}$, $\lambda=\frac{\lambda_{1}+\lambda_{2}}{2}$
and $\kappa=\frac{\lambda_{1}-\lambda_{2}}{2}$. Therefore we have
\[
\Q(\id)=\tr\A(\one)^{r}=\tr\one=2
\]
and according to \eqref{eq:q-1tor-dualrail}
\begin{align*}
\Q((1\dots r)) & =\left(\mu+\lambda\right)^{r}+\left(\mu-\lambda\right)^{r}+\left(\nu+\kappa\right)^{r}+\left(\nu-\kappa\right)^{r}\\
 & =\left(\frac{1+\lambda_{1}+\lambda_{2}+\lambda_{3}}{2}\right)^{r}+\left(\frac{1-\lambda_{1}-\lambda_{2}+\lambda_{3}}{2}\right)^{r}\\
 & +\fracp{1+\lambda_{1}-\lambda_{2}-\lambda_{3}}2^{r}+\fracp{1-\lambda_{1}+\lambda_{2}-\lambda_{3}}2^{r}\,.
\end{align*}
Assuming $\Qmax=\max\{2,\Q((1\dots r))\}$ the output is maximally
mixed if $|\lambda_{1}|+|\lambda_{2}|+|\lambda_{3}|\leq1$. But this
is exactly the condition for $\A$ to be entanglement breaking, according
to Theorem 3 in \cite{ruskai_entanglementbreaking}.

\subsection{$\Q((1\dots r))$ of $\Delta_{\lambda}$ for any dimension $d$\label{sec:q-dim-d}}

For the $d$-dimensional depolarizing channel we have $\Q(\id)=\tr\one=d$
as usual. Furthermore, the Choi-Jamiolkowsi representation consists
of two blocks, one $d$-dimensional block with diagonal entries $\mu=\frac{1+\lambda}{d}$
and off-diagonal entries $\lambda$, and another block that is $\nu=\frac{1-\lambda}{d}$
times identity on the other $d^{2}-d$ dimensions. For example for
$d=3$ the representation looks like
\[
Choi(\Delta_{\lambda})=\begin{pmatrix}\mu &  &  &  & \lambda &  &  &  & \lambda\\
 & \nu\\
 &  & \nu\\
 &  &  & \nu\\
\lambda &  &  &  & \mu &  &  &  & \lambda\\
 &  &  &  &  & \nu\\
 &  &  &  &  &  & \nu\\
 &  &  &  &  &  &  & \nu\\
\lambda &  &  &  & \lambda &  &  &  & \mu
\end{pmatrix}\,.
\]
The eigenvalues of the first block are $\mu-\lambda$ with $(d-1)$-multiplicity
and a single eigenvalue $\mu+(d-1)\lambda$. Therefore,
\begin{align*}
\Q((1\dots r)) & =\tr Choi(\Delta_{\lambda})^{r}\\
 & =(\mu+(d-1)\lambda)^{r}+(d-1)(\mu-\lambda)^{r}+(d^{2}-d)\nu^{r}\\
 & =\fracp{1+(d^{2}-1)\lambda}d^{r}+(d^{2}-1)\fracp{1-\lambda}d^{r}\,.
\end{align*}
This result agrees with the result for $d=2$ found in \ref{sub:evaluating-q}.
The critical value is $\lambda=\frac{1}{d+1}$, below this value the
average output is maximally mixed.

\section{Conclusion}

We found a general explicit form for $\beta_{r}(\A)$ depending on
the function $\Q_{\A}:Sym(r)\to\R$. In the limit $n\to\infty$ the
maximal term $\Q_{\max}$ is dominant in $\beta_{r}^{\reg}(\A)$ and
therefore the only relevant term. However, finding $\Qmax$ is not
easy in general. In the case of the qubit depolarizing channel we
proved that $\Q_{\max}=\max\{2,\Q_{\Delta_{\lambda}}((1\dots r))\}$
and also that $\overline{\beta}_{r}(\Delta_{\lambda})=\Srreg(\Delta_{\lambda})$
for some $\lambda$. For all $r\in\N$ and $\lambda\leq1/3$ the regularized
output entropy becomes 1. Because the typical high-dimensional random
state is highly entangled our result for $\lambda\leq1/3$ agrees
with the notion that the tensor product of an entanglement breaking
channel {}``chops up'' these highly entangled states and produces
maximally mixed states with almost certain probability.

For future work it would be interesting to also study channels other
than the depolarizing channel, especially channels that are not known
to be additive. Also, it might be of interest to study quantities
similar to $\Srreg$ with the same general procedure, for example
$\E{\tr\A\otimes\one(\ket{\phi}\bra{\phi})^{r}}$. It would be insightful
to gain a better understanding of the typical output and input states
by finding explicit examples that conform with the average output.

\section{Acknowledgements}

We thank the reviewer for important corrections and improvements which
helped the quality and presentation of the paper.

\appendix

\section{$\Q$ sum diagrams\label{app:q-conjugacy-class-example}}

We calculate an example of the subsums appearing in the proof of Lemma~\ref{lem:q-conjugacy-class}.
Let $\alpha=(123)(45)$, $\beta=(143)(25)$ and $\gamma=(24)$. With
this choice the correspondence of diagrams consists of switching tiles
2 and 4. The choice of indices 
\[
\begin{array}{|ccccc|}
\hline x_{1} & x_{2} & x_{3} & x_{4} & x_{5}\\
\hline 0 & 1 & 1 & 0 & 0
\\\hline \end{array}
\]
 gives the diagrams and totals shown in Table~\ref{tab:contributions}.
The contribution to $\Q(\alpha)$ dominates in both factors as expected
because $\alpha$ is non-overlapping.

\begin{table}
\begin{centering}
\includegraphics{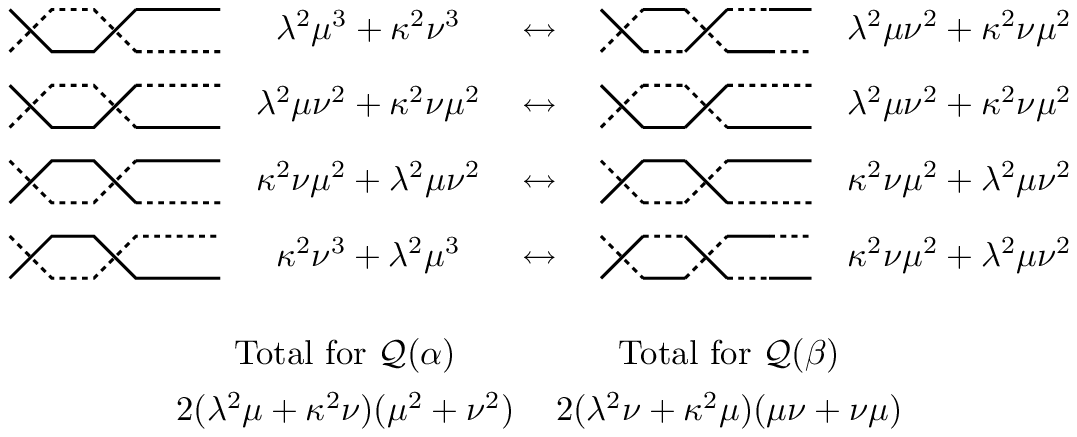}
\par\end{centering}

\caption{Contributions to the sums $\Q(\alpha)$ and $\Q(\beta)$.\label{tab:contributions}}
\end{table}
\bibliographystyle{ieeetr}

\end{document}